\RenewDocumentCommand\pay{}{\mu}
\NewDocumentCommand\change{mmm}{#1[#2 \leftarrow #3]}
\NewDocumentCommand\apply{mm}{#1\parens[big]{#2}}
\NewDocumentCommand\applysmall{mm}{#1\parens{#2}}
\NewDocumentCommand\DIS{}{\cD}
\NewDocumentCommand\nn{}{\nu}
\NewDocumentCommand\localparam{}{\lam}
\begin{document}
\title{Price of Anarchy for Graph Coloring Games\\
with Concave Payoff}
\author{Lasse Kliemann \and Elmira Shirazi Sheykhdarabadi \and Anand Srivastav}
\date{\smaller%
Department of Computer Science\\Kiel University\\Christian-Albrechts-Platz 4\\24118 Kiel, Germany\\
\texttt{\{lki,esh,asr\}@informatik.uni-kiel.de}}
\maketitle
\enlargethispage{1\baselineskip}

\begin{abstract}
  We study the price of anarchy in a class of graph coloring games (a subclass of polymatrix common-payoff games).
  In those games, players are vertices of an undirected, simple graph,
  and the strategy space of each player is the set of colors
  from $1$ to $k$.
  A tight bound on the price of anarchy of $\frac{k}{k-1}$
  is known (Hoefer 2007, Kun \etal 2013), for the case that
  each player's payoff is the number of her neighbors with different color than herself.
  The study of more complex payoff functions was left as an open problem.
  \par\smallskip
  We compute payoff for a player
  by determining the \emphasis{distance} of her color
  to the color of each of her neighbors,
  applying a non-negative, real-valued, \emphasis{concave} function~$f$ to each of those distances,
  and then summing up the resulting values.
  This includes the payoff functions suggested by Kun \etal (2013) for future work as special cases.
  \par\smallskip
  Denote $f^*$ the maximum value that $f$ attains on the possible distances $0,\hdots,k-1$.
  We prove an upper bound of $2$ on the price of anarchy for concave functions $f$
  that are non-decreasing or which assume $f^*$ at a distance on or below $\floor{\frac{k}{2}}$.
  Matching lower bounds are given for the monotone case and for the case that $f^*$ is assumed in $\frac{k}{2}$ for even $k$.
  For general concave functions, we prove an upper bound of $3$.
  We use a simple but powerful technique:
  we obtain an upper bound of $\localparam \geq 1$ on the price of anarchy
  if we manage to give a \term{splitting} $\localparam_1 + \hdots + \localparam_k = \localparam$ such that
  $\sum_{s=1}^k \localparam_s \cdot f(\abs{s-p}) \geq f^*$ for all $p \in \setft{1}{k}$.
  The discovery of working splittings can be supported by computer experiments.
  We show how, once we have an idea what kind of splittings work,
  this technique helps in giving simple proofs,
  which mainly work by case distinctions, algebraic manipulations, and real calculus.
  \par\smallskip
  Graph coloring, especially in distributed and game-theoretic settings,
  is often used to model spectrum sharing scenarios, such as the selection of WLAN frequencies.
  For such an application, our framework would allow to express a dependence of the degree of radio interference
  on the distance between two frequencies.
  \par\medskip\noindent
  \textbf{Keywords:} graph coloring, price of anarchy, concave function, spectrum sharing
\end{abstract}

\section{Model, Notation, Basic Notions}
Let $G=(V,E)$ be an undirected, simple graph without any isolated vertices
($n\df\card{V}$ and $m\df\card{E}$) and $k \in \NN_{\geq 2}$.
For $v \in V$ denote $N(v) \df \set{w \in V \suchthat \set{v,w} \in E}$
the set of its \term{neighbors}
and $\deg(v) \df \card{N(v)}$ its \term{degree}.
A function \mbox{$\fn{c}{V \map \setn{k}}$} is called a $k$-\term{coloring}
or \term{coloring}, where $\setn{k} = \setft{1}{k}$.
Vertices of the graph represent players,
with the set of colors $\setn{k}$ being the strategy space for each player.
We sometimes call the set $\setn{k}$ the \term{spectrum}.
Clearly, $k$-colorings are exactly the strategy profiles of this game.\footnote{%
  The term \enquote{graph coloring game} in the literature also names a class of maker-breaker style games,
  which we are not referring to here.}
Given a coloring~$c$, define the \term{payoff} for player $v$ as
\begin{equation}
  \label{eq:def-payoff}
  \pay_v(c) \df \sum_{w \in N(v)} \apply{f}{\abs{c(v)-c(w)}} \comma
\end{equation}
where $f$ is a non-negative, real-valued function defined on $[0,k]$
(choosing this domain instead of $\setft{0}{k-1}$ is technically easier).
Given such $f$, we denote $f^* \df \max_{i \in \DIS} f(i)$
the maximum value that $f$ attains on the possible distances $\DIS \df \setft{0}{k-1}$ between two colors,
and
\begin{equation*}
  \DIS^*(f) \df \set{i \in \DIS \suchthat f(i) = f^*}
\end{equation*}
the set of distances where $f^*$ is attained.
We call $\applysmall{f}{\abs{c(v)-c(w)}}$ the \term{contribution}
of edge $\set{v,w}$.
So $f^*$ is an upper bound on the contribution of any edge,
and it is attained if the two players $v$ and $w$ manage to put a distance between each other
which is in the set $\DIS^*(f)$.
We assume $f^* > 0$, since otherwise the situation is uninteresting.
By concavity, this implies 
\begin{equation}
  \label{eq:f-positive}
  f(i) > 0 \quad\forall i \in \DIS^+ \df \setft{1}{k-1} \comma
\end{equation}
that is, $f$ is positive for all the positive distances.
\par
Let $c$ be a $k$-coloring.
For a player $v \in V$ and a color $t \in \setn{k}$
we write the coloring where $v$ 
\term{changes} to color $t$ as $\change{c}{v}{t}$, so
\begin{equation*}
  \change{c}{v}{t}(w) = \begin{cases}
    t & \text{if $w=v$} \\
    c(w) & \text{otherwise}
  \end{cases}
  \quad\quad \text{for $w \in V$.}
\end{equation*}
A coloring $c$ is called \term{stable} if
\begin{equation}
  \label{eq:def-stable}
  \pay_v(\change{c}{v}{t}) \leq \pay_v(c) \quad \forall t \in \setn{k}
  \quad \forall v \in V \period
\end{equation}
In other words, stable colorings are exactly
the pure Nash equilibria of the game.
The \term{social welfare} or \term{welfare}
of $c$ is $\welf(c) \df \sum_{v \in V} \pay_v(c)$.
We denote the optimum welfare as $\welf_{\OPT} \df \max_c \welf(c)$,
where $c$ runs over all $k$-colorings;
optimum colorings exist since there are only finitely many colorings.
The price of anarchy is:
\begin{equation*}
  \PoA(G,k,f) \df \max_{\text{$c$ is stable $k$-coloring}} \frac{\welf_{\OPT}}{\welf(c)}
\end{equation*}
This notion was introduced by Koutsoupias and Papadimitriou~\cite{KP99},
and the name \term{price of anarchy} was shortly after coined by Papadimitriou~\cite{Pap01}.
It measures the worst-case performance-loss due to non-cooperative behavior.
\par
It is easy to see that stable colorings always exist, by a potential function argument.
This has been observed before in the broader class of polymatrix common-payoff games~\cite{CD11},\footnote{%
  In~\cite{CD11}, the term \enquote{coordination} is used instead of \enquote{common-payoff}.}
and we repeat that simple argument here for completeness.
Note that we can write social welfare as follows:
\begin{equation}
  \label{eq:edgewise}
  \welf(c) = 2 \sum_{\set{u,w} \in E} \apply{f}{\abs{c(u)-c(w)}}
\end{equation}
If a player $v$ makes an \term{improvement step}
resulting in coloring $c'$
(\ie there is $t \in \setn{k}$ such that $c'= \change{c}{v}{t}$
and $\pay_v(c) < \pay_v(c')$) then
\begin{equation*}
  \sum_{\set{v,w} \in E} \apply{f}{\abs{c(v)-c(w)}}
  < \sum_{\set{v,w} \in E} \apply{f}{\abs{c'(v)-c'(w)}} \comma
\end{equation*}
that is, the part of the sum in \eqref{eq:edgewise} with edges incident in $v$ strictly increases.
All other terms in the sum are maintained.
Hence $\welf(c) < \welf(c')$.
So in particular any optimal coloring is stable.
Since existence of optimal colorings is guaranteed, we also have stable colorings.
Note however that there can be sub-optimal stable colorings,
which motivates the study of the price of anarchy.

\section{Previous, Related Work and Our Contribution}
\subsection*{Previous and Related Work}
Graph coloring has been a theme in Combinatorics
and Combinatorial Optimization for many decades.
A coloring $c$ for a graph $G=(V,E)$ is called \term{proper} or \term{legal} if
$c(v) \neq c(w)$ for all $\set{v,w} \in E$,
and given $v \in V$ we call a neighbor $w \in N(v)$ \term{properly colored} if $c(v) \neq c(w)$.
Determining the minimal $k$ such that a given graph admits a proper $k$-coloring,
\ie the \term{chromatic number} $\chi(G)$, is \NPhard~\cite{Kar72}.
One of the classical highlights is Brooks' theorem~\cite{Bro41,Lov75a},
stating that for each graph $G$ which is neither a complete graph nor a cycle of odd length, a proper $\Del$-coloring
can be constructed by a combinatorial algorithm in polynomial time,
where $\Del$ is the maximum vertex degree in $G$.
Another important algorithmic technique for finding proper colorings with provable worst-case performance
is semi-definite programming with randomized hyperplane rounding~\cite{KMS98}.
In addition, for more than three decades, distributed algorithms for proper colorings
have been studied, see \cite{BE13} for a recent survey.
In the distributed model, each vertex is a processor and can communicate with its neighbors,
where communication is done in rounds.
For example, in one round, each vertex could communicate its color to all of its neighbors.
In 2006, a study with human subjects was conducted by Kearns \etal~\cite{KSM06},
where each subject controlled the color of one vertex, to be selected from a set of $\chi(G)$ colors,
and each subject could see the colors of her neighbors.
The goal was to construct a proper coloring.
The study used graphs with $n=38$ vertices and focused on certain classes of graphs, like cycles, small world graphs,
and random graphs from the preferential attachment model.
It is reported how the time required to reach a proper coloring is influenced by the structure of the graph.
\par
A game-theoretic view on proper colorings was given in 2008 by Panagopoulou and Spirakis~\cite{PS08}.
In their model, payoff for a player $v$ is $0$ whenever there exists a non-properly colored neighbor of $v$,
otherwise payoff is the total number of players with color $c(v)$ in the graph.
The idea is to incentivize players to create a proper coloring with few different colors.
Indeed, two of the main results in \cite{PS08} are that Nash equilibria can be constructed
by improvement steps in polynomial time, and that they are proper colorings
using a total number of different colors that is upper-bounded by several known upper bounds on $\chi(G)$.
So \cite{PS08} yields a constructive proof for all of those bounds.
For the same model, improved bounds and an extension to coalitions were later given by Escoffier \etal~\cite{EGM10};
and Chatzigiannakis \etal \cite{CKPS10} gave algorithmic improvements including experimental studies.
Also in 2008, Chaudhuri \etal~\cite{CCJ08} considered the simpler payoff function
that is $1$ for player $v$ if all of $v$'s neighbors are properly colored and $0$ otherwise.
They showed that if the total number of available colors is $\Del+1$
and if players do improvement steps in a randomized manner, then an equilibrium (which also is a proper coloring)
is reached in $\Oh{\log(n)}$ steps with high probability.
\par
A generalization of proper colorings are \term{distance-constrained labelings},
which are connected to the frequency assignment problem, where colors correspond to frequencies~\cite{Jan02}.
Given integers $\eli{p}{l}$, a coloring $c$ is called an $L(\eli{p}{l})$-labeling if
for all $i \in \setn{l}$ and all $v,w \in V$ we have
$\abs{c(v)-c(w)} \geq p_i$ whenever $\dist_G(v,w) = i$,
where $\dist_G(v,w)$ is the graph-theoretical distance between $v$ and $w$,
\ie the length of a shortest $v$-$w$ path in $G$.
The notion of $L(1)$-labeling coincides with that of proper coloring.
The case of $L(p,q)$-labelings, that is, where $l=2$, has received special attention,
in particular in connection with frequency assignment, see, \eg \cite{Yeh06,Cal11,HIOU14}.
As a variation of this, not the distance $\abs{c(v)-c(w)}$ between colors is considered,
but instead it is required that
\begin{equation*}
  \min\set[big]{\abs{c(v)-c(w)}, \: k-\abs{c(v)-c(w)}} \geq p_i \comma
\end{equation*}
whenever \mbox{$\dist_G(v,w) = i$}, as considered in~\cite{HLS98}.
Another variation are $T$-colorings \cite{Hal80,Rob91}: given a set of integers $T$,
a $T$-coloring is one where $\abs{c(v)-c(w)} \not\in T$ whenever $\set{v,w} \in E$.
This problem also arises in frequency assignment
where $T$ is the set of forbidden separations between neighboring senders,
which are known to cause interference.
\par
In our model, we allow colorings being evaluated by the payoff function
and not only ask whether they have a certain property (\eg, being proper, a $T$-coloring, \etc) or not.
For example, using
\begin{equation}
  \label{eq:basic-payoff}
  \fnx{f}{[0,k] \map \RRnn}{x \mapsto
    \begin{cases}
      0 & \text{if $x=0$} \\
      1 & \text{otherwise}
    \end{cases}}
\end{equation}
as the function~$f$, payoff for each player $v$ is the number of her properly colored neighbors.
We call this \term{basic payoff}.
Note that this is a concave function.
With this payoff, the game is also known as a \term{max-$k$-cut game} (in the unweighted version),
since we partition the vertices in $k$ clusters and payoff for $v$ is
the total number of cut edges incident to $v$, \ie edges incident to $v$ and running between clusters.
These games were studied by Hoefer~\cite{Hoe07} in 2007,
and a tight bound of $\frac{k}{k-1}$ on the price of anarchy was proved,
where tightness is already attained on bipartite graphs.
The upper bound works by a mean-value argument:
for each coloring and for each player~$v$,
there is a color $t$ that occurs on at most $\frac{\deg(v)}{k}$ neighbors of $v$.
In a stable coloring, $v$ chooses this $t$, or even better if possible,
hence obtaining a payoff of at least $\frac{k-1}{k} \deg(v)$.
Since optimum welfare is upper-bounded by $2m$, the theorem follows from this
(using the handshaking lemma for the sum of degrees).
In \autoref{app:mean-value} we provide indication that already for $f(x)=x$,
a straightforward generalization of this technique cannot yield a constant upper bound.
\par
A weighted version of max-$k$-cut games is obtained by assigning a weight to each edge
and defining payoff for $v$ as the sum of the weights of the incident cut edges.
This version under the aspect of coalitions among players
was studied by Gourv{\`e}s and Monnot~\cite{GM09} in~2009.
\par
In 2013, Kun, Powers, and Reyzin~\cite{KPR13} considered complexity issues
for basic payoff, \ie, payoff as per \eqref{eq:basic-payoff}, combined with variations of the game.
It is clear that since welfare for basic payoff can only change
in integer steps and can never be more than $2m$,
we will reach a stable coloring after at most $\Oh{m}$ improvement steps
($m$ is the number of edges in the graph).
On the other hand, Kun \etal show that for basic payoff, it is \NPhard to decide whether a graph
admits a \term{strictly stable} coloring,
where the latter notion is defined by replacing $\leq$ for $<$ in \eqref{eq:def-stable}.
They also show that for basic payoff,
it is \NPhard to decide whether a \emphasis{directed} graph has a stable coloring,
where for directed graphs, payoff is defined by having the sum in \eqref{eq:def-payoff}
only range over the out-neighbors of~$v$.
\par
Recently, in 2014, Apt \etal~\cite{ARSS14} used the function
\begin{equation}
  \label{eq:coordination}
  \fnx{f}{[0,k] \map \RRnn}{x \mapsto
    \begin{cases}
      1 & \text{if $x=0$} \\
      0 & \text{otherwise}
    \end{cases}}
  \comma
\end{equation}
which counts the neighbors of the \emphasis{same} color,
together with the extension that each player has a set of colors to choose from
(whereas we in our model always allow all colors for all players).
They study this game under different aspects,
including coalitions, price of anarchy, and complexity.
As for the price of anarchy without coalitions,
it is easy to see that it can be unbounded.
The lower-bound construction depends on the fact that we can forbid certain colors for certain players:
for example, for each player $v$ let there be a distinct \enquote{private} color $s_v$,
and in addition there is a \enquote{common} color $t$.
Player $v$ can choose her color from the set $\set{s_v,t}$.
The social optimum, namely $2m$, is obtained when each player chooses $t$,
whereas a worst case stable coloring, with welfare $0$, is obtained when each player $v$ chooses $s_v$.
Without the ability to restrict players to certain colors, \ie if we use our framework for the function~$f$ in \eqref{eq:coordination},
a tight bound on the price of anarchy of $k$ is easy to see (\autoref{app:same-color}).
\par
The graph coloring games studied in our work
belong to the class of \term{polymatrix games} \cite{Yan68,CD11}.
In such a game, we have a graph $G=(V,E)$,
for each player a set of strategies, and for each edge $\set{v,w} \in E$ a two-player matrix game $\Gam_{\set{v,w}}$.
Each player $v$ chooses one strategy and has to play this same strategy in all the games $\set{\Gam_{\set{v,w}}}_{w \in N(v)}$
corresponding to incident edges.
Payoff for $v$ is the sum of the payoffs over all those two-player games.
A special case is that of \term{polymatrix common-payoff games},
which means that each $\Gam_{\set{v,w}}$ is a \term{common-payoff game},
\ie it always yields the same payoff for $v$ as for $w$.\footnote{%
  Such are called \enquote{coordination games} sometimes, but the use of this term is not consistent in the literature.
  We stick to the terminology from \cite[Sec.~1.3.2]{LS08}, which is \enquote{common-payoff game}.}
Thus our graph coloring games belong to this class,
since each edge $\set{v,w}$ contributes the same value $\applysmall{f}{\abs{c(v)-c(w)}}$ to the payoffs of $v$ and of $w$.
Very recently, in 2015, Rahn and Schäfer~\cite{RS15} studied polymatrix common-payoff games with coalitions.
They consider $(\al,l)$-equilibria, that are $\al$-approximate equilibria under coalitions of size $l$.
For the corresponding price of anarchy,
they give a lower bound of $2\al(n-1)/(l-1)+1-2\al$ and an upper bound of $2\al(n-1)/(l-1)$.
Note that in our work, we have $l=1$ since we do not consider coalitions,
and for this case their bounds are~$\infty$.
This follows already from the example given in~\cite{ARSS14},
which is based on restricting certain players to certain colors (the \enquote{private} and \enquote{public} colors)
and is explained in the previous paragraph.
\subsection*{Our Direction}
In 2013, Kun \etal \cite{KPR13}
named as open problems the study of the price of anarchy for payoff as per \eqref{eq:def-payoff}
and induced by $f$ being the identity, \ie the contribution of edge $\set{v,w}$
is the distance $\abs{c(v)-c(w)}$.
We call this \term{distance payoff}.
A variation, which Kun \etal also refer to,
is the notion studied by van den Heuvel \etal~\cite{HLS98} in the context of frequency assignment,
namely the contribution of $\set{v,w}$ is $\min\set{\abs{c(v)-c(w)}, \, k-\abs{c(v)-c(w)}}$,
which in our notation means $f(x) = \min\set{x, \, k-x}$.
We call this \term{cyclic payoff}.
It rewards players for keeping a \enquote{medium} distance from others.
This has an additional interpretation in connection with an example often given in the context of proper colorings,
namely where colors correspond to skills and people inside of an organization try to
develop skills that are complementary to nearby colleagues (see, \eg~\cite{KSM06,CCJ08}).
Cyclic payoff refines that idea: subjects are incentivized to develop different
\emphasis{but still related (that is, not too far away)} skills.
Distance and cyclic payoff both result from concave functions~$f$,
so they are both special cases of the payoff functions studied in our work.
\par
Spectrum sharing and frequency (or channel) assignment problems,
that is, when a multitude of participants compete for using the same or similar frequencies,
has received much attention lately, see, \eg~\cite{PZZ06,FK07,HHLM10,DA12}.
Many works in that field use some form of graph coloring (colors corresponding to frequencies)
in a distributed or game-theoretic setting.
It is common in the frequency assignment literature
to consider not only feasible versus infeasible colorings
but instead to quantify the \enquote{degree of interference}.
To this end,
the spectral distance $\abs{c(v)-c(w)}$ between players $v$ and $w$ is often used as a measure
or as a substantial ingredient to a measure.
In addition to the references we give above for distance-constrained labeling and for $T$-coloring,
this is documented for example in~\cite[Sec.~2]{Gam86}, \cite[Sec.~2.1]{ASH02},
\cite[Sec.~II.B]{SP97}, \cite[Sec.~3]{AHK+07}, and \cite[Sec.~11.6]{BEG+10}.
So our game can be used to model the case where each participant of a wireless network,
\eg a mobile telephone network or wireless sensor network,
chooses her frequency non-cooperatively and with respect to a measure of interference 
expressed by a concave function applied to spectral distances
(where higher values of that function mean less interference).
\par
A first game-theoretic study of distance and cyclic payoff in our framework
was conducted by Schink~\cite{Sch14} in 2014.
He observed that a stable $k$-coloring for distance payoff 
can be constructed from a stable $2$-coloring for basic payoff
by replacing color $2$ with~$k$.
A similar approach works for cyclic payoff if $k$ is even,
replacing color $2$ with $\frac{k}{2}+1$.
Since a stable coloring for basic payoff can be constructed
in $\Oh{m}$ improvement steps, 
this gives a runtime guarantee independent of~$k$.
Since welfare can reach up to $\Ohm{mk}$ for distance and cyclic payoff,
such a $k$-independent guarantee is not easily possible by
a direct argument using improvement steps.
Interestingly, for cyclic payoff and \emphasis{odd} $k$,
we have no $k$-independent runtime guarantee for the construction of stable colorings at this time.
For price of anarchy, Schink proved an upper bound of $\Del(G)$,
the maximum vertex degree in $G$, for cyclic payoff.
Apart from that, we are not aware of any bounds on the price of anarchy
for graph coloring games with concave payoff, not even conjectures.
The work by Rahn and Schäfer~\cite{RS15}, for general polymatrix common-payoff games,
can be considered orthogonal to ours since they do not consider the effects
of restricting the two-player games $\Gam_{\set{v,w}}$ to certain classes,
whereas we restrict to such games arising from applying a concave function
to the color distance $\abs{c(v)-c(w)}$,
resulting in small constant bounds on the price of anarchy.
Moreover, \cite{RS15} allows restricting players to certain sets of colors
(making finite bounds on the price of anarchy impossible without coalitions),
whereas in our model, each player has the same set of colors to choose from.
\subsection*{Our Contribution and Techniques}
We prove constant upper bounds on the price of anarchy for several classes of concave functions~$f$.
We prove a bound of $2$ for all concave functions $f$
which assume $f^*$ at a distance on or below $\floor{\frac{k}{2}}$,
that is, for which $\DIS^*(f) \cap \setft{0}{\floor{\frac{k}{2}}} \neq \emptyset$.
This includes cyclic payoff.
We show that for this class of functions, this bound is the best possible,
since for cyclic payoff with even $k$, the price of anarchy is exactly~$2$.
For non-decreasing concave functions, we also show an upper bound of $2$.
This includes distance payoff.
Again, we show that for this class of functions, this bound is the best possible.
Finally, for all remaining concave functions,
which have $\DIS^*(f) \cap \setft{\floor{\frac{k}{2}}+1}{k-2} \neq \emptyset$,
we prove an upper bound of~$3$.
The upper bounds are proved in \autoref{splitting} and \autoref{general},
and the lower bounds are given in \autoref{lower}.
\par
It may be surprising at first that the situation is not symmetric:
functions with their maximum left of the middle of the spectrum behave differently 
from functions with their maximum right of the middle.
But in fact this is to be expected since for example,
a player can always force all the distances to her neighbors
to be on or below $\floor{\frac{k}{2}}$ by choosing her own color as $\floor{\frac{k}{2}}+1$,
but it is not always possible to force all distances beyond~$1$.
So there is an asymmetry between short and long distances.
\par
All our proofs work by \emphasis{local} arguments.
That is, if we are to prove an upper bound of $\localparam \geq 1$
on the price of anarchy, we show the following:
for each player $v$,
given the colors $\set{c(w)}_{w \in N(v)}$ of her neighbors,
there is a color $t \in \setn{k}$ such that
\begin{equation*}
  \sum_{w \in N(v)} \apply{f}{\abs{c(w)-t}}
  \geq \frac{\deg(v) \cdot f^*}{\localparam} \period
\end{equation*}
Clearly, in a stable coloring,
each player chooses such a color $t$, or better.
Hence $\welf(c) \geq \frac{2 m f^*}{\localparam}$ for a stable $c$.
Since $\welf_{\OPT} \leq 2 m f^*$, the bound $\localparam$ follows.
\par
In the next section, we provide a framework for the type of local arguments described in the above paragraph.
To this end, we introduce the \term{local parameter} $\localparam(f,k)$ of a function $f$
and show $\PoA(G,f,k) \leq \localparam(f,k)$.
In \autoref{quick}, we show how to obtain an upper bound of $4$ on the local parameter, and thus on the price of anarchy,
for any concave function $f$ by a simple technique.
In \autoref{splitting}, a more elaborate technique is introduced.
It says that if we can find a representation of a number $\localparam \geq 1$
as a sum $\localparam_1 + \hdots + \localparam_k = \localparam$
with $\sum_{s=1}^k \localparam_s \cdot f(\abs{s-p}) \geq f^*$ for all $p \in \setn{k}$,
then we have $\localparam(f,k) \leq \localparam$,
and thus a bound of $\localparam$ on the price of anarchy.
Using this \term{splitting technique}, we prove our main results in \autoref{splitting} and \autoref{general}.
For example, the upper bound of $2$ for cyclic payoff can be proven using the splitting defined by
$\localparam_1 \df 1$ and $\localparam_{\floor{\frac{k}{2}}+1} \df 1$ and $\localparam_s \df 0$
for all remaining indices $s \in \setn{k} \setminus \set{1, \, \floor{\frac{k}{2}}+1}$.
\par
Appropriate splittings up to a certain granularity for moderate values of $k$ and given function $f$
can be found by computer, doing a simple enumeration.
We fix $\del=(\eli{\del}{r})$ such that $\localparam=\sum_{i=1}^r \del_i$,
for example $\del=(1,1,\frac{1}{4},\frac{1}{4})$ for $\localparam=2.5$.
For each $v=(\eli{v}{r}) \in \setn{k}^r$ define a splitting $\localparam(v)$ by
$\localparam_s(v) \df \sum_{\substack{i \in \setn{r} \\ v_i = s}} \del_i$ for each $s \in \setn{k}$.
Then we let the computer enumerate all $v \in \setn{k}^r$
and output those corresponding splittings $\localparam(v)$ that satisfy the necessary conditions.
Although this is for a concrete $k$ and not general, it can give valuable hints on how to do a general proof.
We used this, with $\del=(1,1)$ and $\del=(1,1,1)$, to obtain essential ideas for the proofs in \autoref{general}.
\subsection*{Ongoing Work and Open Problems}
\begin{compactitemize}
\item Computer experiments suggest that
  the splitting technique should be powerful enough to prove
  an upper bound of $2.5$ for general concave functions, instead of the $3$ that we get.
  A possible splitting appears to have the form $1+1+\frac{1}{4}+\frac{1}{4}$,
  but deciding on the actual indices, \ie, which $\eli{s}{4} \in \setn{k}$ to choose in order to define
  $\localparam_{s_1} \df 1$
  and $\localparam_{s_2} \df 1$
  and $\localparam_{s_3} \df \frac{1}{4}$
  and $\localparam_{s_4} \df \frac{1}{4}$, is more complicated.
  We verified by experiment that appropriate indices can be found at least for all $k \leq 100$.
\item Computer experiments suggest that
  for functions assuming their maximum left of the middle of the spectrum,
  better bounds than $2$ on the local parameter should possible,
  depending on the exact location of the maximum.
  However, the evidential basis for this is thin at this time,
  since enumerative computer experiments are hindered by the exponential search space
  (\cf the definition in \eqref{eq:anarchy-value} below).
\item Apart from the splitting technique, are there further theoretical methods to compute or to approximate the local parameter?
  Are there faster practical methods for this task than simple enumeration?
\item For which functions $f$ and which ranges for parameter $k$
  can we find graphs $G$ such that $\PoA(G,f,k) = \localparam(f,k)$?
  In this work in \autoref{lower},
  we construct such graphs for $f(x)=\min\set{x, \, k-x}$ and even $k$ (cyclic payoff)
  and for $f$ of the form $f(x) = ax+b$.
\item Computational issues for finding stable colorings should be addressed,
  in particular the construction of stable colorings
  for cyclic payoff and odd $k$ in a number of steps being independent of~$k$.
\end{compactitemize}

\section{The Local Parameter}
For a function $\fn{f}{[0,k] \map \RRnn}$ we define the \term{local parameter} of $f$ as
\begin{equation}
  \label{eq:anarchy-value}
  \localparam(f,k) \df \max_{\nn \in \NN} \, \max_{\eli{c}{\nn} \in \setn{k}} \,
  \frac{\nn f^*}{\max_{t \in \setn{k}} \sum_{i=1}^\nn \applysmall{f}{\abs{c_i-t}}}
  \period
\end{equation}
By \eqref{eq:f-positive}, the denominator is never zero.
Clearly, the denominator is never greater than $\nn f^*$, so $\localparam(f,k) \geq 1$.
The intuition is that we capture the best way a player can react to a given set of colors $\eli{c}{\nn}$
(which will be the colors of her neighbors when applying this to the graph coloring game)
relative to the maximum conceivable payoff~$\nn f^*$.
\begin{remark}
  We have $\PoA(G,k,f) \leq \localparam(f,k)$.
\end{remark}
\begin{proof}
  Let $c$ be stable and $v \in V$.
  By definition of the local parameter, we have
  \begin{equation*}
  \localparam(f,k) \geq
  \frac{\deg(v) \cdot f^*}{\max_{t \in \setn{k}} \sum_{w \in N(v)} \applysmall{f}{\abs{c(v)-t}}}
  \quad\text{hence,}\quad
  \max_{t \in \setn{k}} \sum_{w \in N(v)} \applysmall{f}{\abs{c(v)-t}} \geq
  \frac{\deg(v) \cdot f^*}{\localparam(f,k)} \period
  \end{equation*}
  That is, there is a color $t$ which $v$ can choose to obtain payoff at least that much.
  Since $c$ is stable, player $v$ chooses such a color, or better, so $\pay_v(c) \geq   \frac{\deg(v) \cdot f^*}{\localparam(f,k)}$.
  Choosing $c$ as a worst-case stable coloring,
  and using the trivial upper bound $\welf_{\OPT} \leq 2m f^*$ and the handshaking lemma, we obtain:
  \begin{equation*}
    \PoA(G,k,f) = \frac{\welf_{\OPT}}{\welf(c)} 
    \leq \frac{2mf^*}{\sum_{v \in V} \pay_v(c)}
    \leq \frac{2mf^*}{\frac{f^*}{\localparam(f,k)} \sum_{v \in V} \deg(v)}
    = \localparam(f,k)
    \tag*{\qedhere}
  \end{equation*}
\end{proof}
The local parameter has the following simple properties.
\begin{remark}\hfill
  \label{scaling-monotonicity}
  \begin{compactitemize}
  \item The local parameter is invariant against scaling, that is,
    $\localparam(f,k) = \localparam(\gam f,k)$ for all $\gam \in \RRpos$.
  \item The local parameter is monotone in the following way.
    Let $\fn{f,g}{[0,k] \map \RRnn}$ such that $f(i) \geq g(i)$ for all $i \in \DIS$, and $f^* = g^*$.
    Then $\localparam(f,k) \leq \localparam(g,k)$.
  \end{compactitemize}
\end{remark}
\begin{proof}
  Follows directly from the definition in \eqref{eq:anarchy-value}.
\end{proof}

\section{A First Upper Bound}
\label{quick}
As a start, we provide a lose bound of $4$ on the local parameter by a simple argument.
This bound is later improved to $2$ and $3$, depending on where $f$ assumes its maximum.
\begin{theorem}
  Let $f$ be concave. Then $\localparam(f,k) \leq 4$.
\end{theorem}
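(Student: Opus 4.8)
The plan is to exhibit, for every fixed set of neighbor colors, a single response that already secures at least a quarter of the best conceivable payoff $\nn f^*$; by the definition of the local parameter in \eqref{eq:anarchy-value} this gives $\localparam(f,k)\le 4$. Rather than guessing one clever color, I would \emph{average over all $k$ candidate responses}. Concretely, for colors $c_1,\dots,c_\nn\in\setn{k}$ one has
\[
\frac1k\sum_{t=1}^k\sum_{i=1}^{\nn} f(\abs{c_i-t}) \;=\; \sum_{i=1}^{\nn}\frac1k\sum_{t=1}^k f(\abs{c_i-t}),
\]
so if I can prove the one-point estimate
\[
\sum_{t=1}^k f(\abs{t-p}) \;\ge\; \frac{k f^*}{4}\qquad\text{for every }p\in\setn{k},
\]
then the right-hand side above is at least $\nn f^*/4$, whence some response $t$ attains $\sum_{i} f(\abs{c_i-t})\ge \nn f^*/4$, and the inner maximum in \eqref{eq:anarchy-value} is bounded below by $\nn f^*/4$. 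Every fraction in \eqref{eq:anarchy-value} is then at most $4$. (This is nothing but the uniform choice of weights $4/k$ at each response; I would present it here as a plain mean-value argument and defer the general splitting machinery of the later section.)

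Concavity enters only through a pointwise lower bound. Let $d^*\in\DIS$ be an integer with $f(d^*)=f^*$. Since $f$ is concave, nonnegative, and maximised by $f^*$, it lies above the chord from $(0,f(0))$ to $(d^*,f^*)$ on $[0,d^*]$ and above the chord from $(d^*,f^*)$ to $(k,f(k))$ on $[d^*,k]$; discarding the nonnegative endpoint terms $f(0),f(k)$ yields
\[
f(x)\;\ge\; f^*\,T(x),\qquad T(x):=\min\Bigl\{\tfrac{x}{d^*},\,\tfrac{k-x}{k-d^*}\Bigr\},
\]
the unit-height triangle peaking at $d^*$. Hence the one-point estimate reduces to the purely combinatorial inequality $\sum_{t=1}^k T(\abs{t-p})\ge k/4$, which no longer mentions $f$. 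The degenerate case $d^*=0$ (i.e.\ $f$ nonincreasing) I would read off separately with $T(x)=(k-x)/k$, where the same bound holds with room to spare.

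The crux, and the step I expect to cost the most work, is this last inequality. Writing $A(r):=\sum_{d=0}^r T(d)$ and using $T(0)=0$, one gets $\sum_{t=1}^k T(\abs{t-p})=A(p-1)+A(k-p)$. Because $d^*$ is an integer, the two arithmetic-progression sums telescope cleanly, and at the endpoints $p\in\{1,k\}$ the value is \emph{exactly} $k/2$, independent of $d^*$. The subtlety is that the minimum over $p$ is \emph{interior}, not at the endpoints: when the peak sits at an extreme (say $d^*=k-1$, where $T(d)=d/(k-1)$), the quantity $A(a)+A(b)$ with $a+b=k-1$ is a symmetric quadratic, minimised at the balanced split and equal to $(k+1)/4$. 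I would therefore carry out an elementary, piecewise-quadratic minimisation of $A(p-1)+A(k-p)$ in $p$, checking the finitely many regimes determined by the position of $p$ relative to $d^*$, and show the minimum is always $\ge k/4$. This establishes the one-point estimate and hence $\localparam(f,k)\le 4$. The interior worst case also explains why the method yields precisely $4$ and cannot do better with uniform weights, which is consistent with the later improvements to $2$ and $3$ obtained through more carefully placed, non-uniform splittings.
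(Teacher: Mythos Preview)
Your approach is correct but takes a genuinely different route from the paper. The paper does not average: by pigeonhole at least $\nu/2$ of the $c_i$ lie on one side of $\lfloor k/2\rfloor$, and a single explicit response $t$ (namely $\lceil(k+k^*)/2\rceil$ or $\lceil(k-k^*)/2\rceil$, depending on which side holds the majority) forces each of those $|c_i-t|$ into the interval $H=[k^*/2,\,(k+k^*)/2]$, on which concavity alone gives $f\ge f^*/2$; hence $\phi(t)\ge(\nu/2)(f^*/2)$ directly. Your method is instead the uniform splitting $\lambda_s=4/k$ for all $s$, shifting the work to the one-point estimate $\sum_t T(|t-p|)\ge k/4$. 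Your outline of the piecewise-quadratic minimisation is sound: writing $S(p)=A(p-1)+A(k-p)$, the increment $S(p{+}1)-S(p)=T(p)-T(k{-}p)$ shows $S$ is unimodal, with its minimum at the endpoints $p\in\{1,k\}$ (value $k/2$) when $d^*\le k/2$, and at $p\approx k/2$ (value $\ge k^2/(4d^*)\ge k/4$, using $d^*\le k-1$) when $d^*>k/2$. The paper's argument is shorter and needs no sum manipulation; yours integrates naturally into the splitting framework of \autoref{splitting} and, as a by-product, shows that uniform weights already yield a bound of $2$ whenever the peak lies at or left of $k/2$, anticipating \autoref{left}.
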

\begin{proof}
  Let $\nn \in \NN$ and $\eli{c}{\nn} \in \setn{k}$
  and define the function $\phi(t) \df \sum_{i=1}^\nn \applysmall{f}{\abs{c_i-t}}$.
  We are done if we can prove that there is $t \in \setn{k}$ with $\phi(t) \geq \frac{\nn f^*}{4}$.
  Let $k^* \in \DIS^*(f)$, \ie $k^* \in \setft{0}{k-1}$ with $f(k^*) = f^*$.
  By concavity, $f$ assumes a value of at least $\frac{f^*}{2}$
  between $k^*$ and the half-way point to $0$
  as well as to the half-way point to $k$, \ie
  for all $x \in H \df [\frac{k^*}{2}, \, k^*+\frac{k-k^*}{2}]
  = [\frac{k^*}{2}, \, \frac{k+k^*}{2}]$
  we have $f(x) \geq \frac{f^*}{2}$.
  One of the following two cases is given:
  $c_i \leq \floor{\frac{k}{2}}$ 
  for at least $\frac{\nn}{2}$ distinct $i \in \setn{\nn}$,
  or $c_i \geq \floor{\frac{k}{2}} + 1$
  for at least $\frac{\nn}{2}$ distinct $i \in \setn{\nn}$.
  \par
  Assume the first case.
  We choose $t \df \ceiling{\frac{k+k^*}{2}}
  \leq \ceiling{\frac{2k}{2}} = k$.
  Let $i \in \setn{\nn}$ be an index with $c_i \leq \floor{\frac{k}{2}}$
  (of which we have at least $\frac{\nn}{2}$ ones in this case).
  Then
  \begin{equation*}
    \abs{c_i-t} = t-c_i \leq t - 1 = \ceiling{\tfrac{k+k^*}{2}} - 1
    \leq \tfrac{k+k^*}{2} \comma
  \end{equation*}
  and
  \begin{equation*}
    t-c_i
    \geq \ceiling{\tfrac{k+k^*}{2}} - \floor{\tfrac{k}{2}}
    \geq \tfrac{k+k^*}{2} - \tfrac{k}{2} = \tfrac{k^*}{2} \period
  \end{equation*}
  Hence $\abs{c_i-t} \in H$,
  which means that $\applysmall{f}{\abs{c_i-t}} \geq \frac{f^*}{2}$.
  Since we have at least $\frac{\nn}{2}$ of those indices,
  it follows $\phi(t) \geq \frac{\nn}{2} \frac{f^*}{2} = \frac{\nn f^*}{4}$.
  \par
  The other case can be treated likewise;
  define $t \df \ceiling{\frac{k-k^*}{2}}$.
  Let $i \in \setn{\nn}$ be an index with $c_i \geq \floor{\tfrac{k}{2}} + 1$.
  Then
  \begin{equation*}
    \abs{c_i-t} = c_i-t
    \leq k - \ceiling{\tfrac{k-k^*}{2}}
    \leq k - \tfrac{k-k^*}{2}
    = \tfrac{k+k^*}{2} \comma
  \end{equation*}
  and
  \begin{equation*}
    c_i-t
    \geq \floor{\tfrac{k}{2}} + 1 - \ceiling{\tfrac{k-k^*}{2}}
    \geq \tfrac{k}{2} - \tfrac{1}{2} + 1 
    - \parens{\tfrac{k-k^*}{2} + \tfrac{1}{2}}
    = \tfrac{k^*}{2} \period
  \end{equation*}
  So also in the case, $\abs{c_i-t} \in H$.
\end{proof}

\section{The Splitting Technique}
\label{splitting}
We give a simple but powerful technique to prove upper bounds on the local parameter of a given function.
Given a number $\localparam \in \RR_{\geq 1}$,
we call a family of numbers $\eli{\localparam}{k} \in \RRnn$ a \term{splitting of $\localparam$},
provided that $\localparam = \sum_{s=1}^k \localparam_s$.
\begin{lemma}
  Let $\localparam \geq 1$ and $\eli{\localparam}{k} \in \RRnn$ be a splitting of $\localparam$.
  Let $\fn{f}{[0,k]\map\RRnn}$.
  Assume that the following condition is given:
  \begin{equation}
    \label{eq:lower-fstar}
    \tag{\textasteriskcentered}
    \forall p \in \setn{k} \bigholds
    \sum_{s=1}^k \localparam_s \cdot f(\abs{s-p}) \geq f^*
  \end{equation}
  Then $\localparam(f,k) \leq \localparam$.
\end{lemma}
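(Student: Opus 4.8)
The plan is to unfold the definition of the local parameter in \eqref{eq:anarchy-value} and reduce the claim to a single averaging inequality. Fix an arbitrary $\nn \in \NN$ and arbitrary colors $\eli{c}{\nn} \in \setn{k}$. It suffices to exhibit a response color $t \in \setn{k}$ with $\sum_{i=1}^\nn \applysmall{f}{\abs{c_i - t}} \geq \frac{\nn f^*}{\localparam}$: this makes every fraction in \eqref{eq:anarchy-value} at most $\localparam$, and since $\localparam(f,k)$ is the maximum of these fractions over all $\nn$ and all color tuples, it is likewise at most $\localparam$.

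The key step is to read the splitting $\eli{\localparam}{k}$ as a system of weights on the possible response colors and to double count. Concretely, I would form the weighted total
\begin{equation*}
  \sum_{s=1}^k \localparam_s \sum_{i=1}^\nn \applysmall{f}{\abs{c_i - s}}
\end{equation*}
and swap the order of summation, so that the term for a fixed index $i$ becomes $\sum_{s=1}^k \localparam_s \applysmall{f}{\abs{s - c_i}}$ (using $\abs{c_i - s} = \abs{s - c_i}$). Since each $c_i$ lies in $\setn{k}$, condition \eqref{eq:lower-fstar} applies with $p = c_i$ and bounds this inner sum from below by $f^*$. Summing over the $\nn$ indices then shows that the whole weighted total is at least $\nn f^*$.

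To finish, I would bound the same weighted total from above. Writing $M \df \max_{t \in \setn{k}} \sum_{i=1}^\nn \applysmall{f}{\abs{c_i-t}}$ and using $\localparam_s \geq 0$ together with $\sum_{s=1}^k \localparam_s = \localparam$, each inner sum $\sum_{i=1}^\nn \applysmall{f}{\abs{c_i - s}}$ is at most $M$, so the weighted total is at most $\localparam \cdot M$. Chaining the two bounds gives $\nn f^* \leq \localparam \cdot M$, hence $M \geq \frac{\nn f^*}{\localparam}$, which is exactly the response-color guarantee required above (the maximizing $t$ does the job).

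I expect no serious obstacle here: once one spots that the splitting should be treated as a weight vector and that the order of summation should be swapped, the argument is a routine double-counting paired with the trivial inequality $\sum_s \localparam_s a_s \leq \localparam \cdot \max_s a_s$ for nonnegative weights. The only point deserving a little care is that \eqref{eq:lower-fstar} is stated for $p \in \setn{k}$, so one must observe that each color $c_i$ is a legitimate value of $p$, which holds by hypothesis.
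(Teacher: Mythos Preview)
Your argument is correct and is essentially the paper's own proof: both treat the splitting as nonnegative weights on response colors, bound the weighted total from above by $\localparam \cdot M$ and from below via \eqref{eq:lower-fstar} after exchanging the order of summation. The only cosmetic difference is that the paper first groups the indices $i$ by their color value, introducing counts $\nn_p \df \card{\set{i \suchthat c_i = p}}$, whereas you sum directly over $i$; the two are equivalent and neither adds or omits any idea.
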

\begin{proof}
  Let $\nn \in \NN$ and $\eli{c}{\nn} \in \setn{k}$.
  We prove that $\localparam \cdot \max_{t \in \setn{k}} \sum_{i=1}^\nn \applysmall{f}{\abs{c_i-t}} \geq \nn f^*$.
  For each $p \in \setn{k}$ denote $\nn_p \df \card{\set{i\in\setn{\nn}\suchthat c_i = p}}$,
  that is, how many times the number $p$ occurs in the family $\eli{c}{\nn}$.
  We have:
  \begin{align*}
    \localparam \cdot \max_{t \in \setn{k}} \sum_{i=1}^\nn \applysmall{f}{\abs{c_i-t}}
    & = \sum_{s=1}^k \localparam_s \cdot \max_{t \in \setn{k}} \sum_{i=1}^\nn \applysmall{f}{\abs{c_i-t}} & \text{def. of splitting} \\
    & \geq \sum_{s=1}^k \localparam_s \cdot \sum_{i=1}^\nn \applysmall{f}{\abs{c_i-s}} & \text{maximum} \\
    & = \sum_{s=1}^k \localparam_s \cdot \sum_{p=1}^k \nn_p \cdot \applysmall{f}{\abs{p-s}} & \text{subsume same values} \\
    & = \sum_{p=1}^k \nn_p \cdot \sum_{s=1}^k \localparam_s \cdot \applysmall{f}{\abs{p-s}} & \text{exchange summation} \\
    & \geq \sum_{p=1}^k \nn_p f^* & \text{by \eqref{eq:lower-fstar}} \\
    & = \nn f^*
    \tag*{\qedhere}
  \end{align*}
\end{proof}
We demonstrate the use of the splitting technique by a couple of simple proofs.
\begin{proposition}
  \label{affine}
  Let $a \in \RRpos$ and $b \in \RRnn$ and $f(x) = ax + b$.
  Then $\localparam(f,k) \leq \rho(a,b,k) \df 2 \frac{a (k-1) + b}{a (k-1) + 2b} \leq 2$.
\end{proposition}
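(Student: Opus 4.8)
The plan is to apply the splitting technique from the preceding lemma, so the whole task reduces to exhibiting one good splitting. First I would record that, since $a \in \RRpos$, the function $f(x)=ax+b$ is strictly increasing on $[0,k]$, so its maximum over the distances $\DIS = \setft{0}{k-1}$ is attained at the largest distance: $f^* = f(k-1) = a(k-1)+b$.

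The key idea is to concentrate all the weight of the splitting on the two extreme colors $1$ and $k$. Concretely, I would set $\localparam_1 \df \localparam_k \df c$ and $\localparam_s \df 0$ for all $s \in \setn{k}\setminus\set{1,k}$, where $c \in \RRnn$ is a constant to be fixed later; this is a splitting of $\localparam = 2c$. For any $p \in \setn{k}$ we have $\abs{1-p} = p-1$ and $\abs{k-p} = k-p$, so the left-hand side of condition \eqref{eq:lower-fstar} evaluates to
\begin{equation*}
  c\,f(p-1) + c\,f(k-p) = c\parens[big]{a(p-1)+b + a(k-p)+b} = c\parens[big]{a(k-1)+2b} \period
\end{equation*}
The crucial point is that this quantity is \emph{independent of $p$}: the two distances always sum to $k-1$, and because $f$ is affine their $f$-values sum to the constant $a(k-1)+2b$. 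This telescoping is exactly what affinity buys us, and it is the one real observation in the proof.

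It then remains only to choose $c$ so that this constant meets the required threshold $f^*$. Solving $c\parens{a(k-1)+2b} = a(k-1)+b$ gives $c = \frac{a(k-1)+b}{a(k-1)+2b}$, and with this choice condition \eqref{eq:lower-fstar} holds with equality for every $p \in \setn{k}$. The splitting lemma then yields $\localparam(f,k) \leq 2c = \rho(a,b,k)$; note that $\rho \geq 1$ (indeed $c > \tfrac{1}{2}$ since $a(k-1)>0$), so the lemma is applicable. Finally, because $b \geq 0$ we have $a(k-1)+b \leq a(k-1)+2b$, hence $c \leq 1$ and therefore $\rho(a,b,k) \leq 2$.

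I do not expect any genuine obstacle here: there is no hard estimate, since the quantity in condition \eqref{eq:lower-fstar} is literally constant in $p$. The only thing one has to \emph{guess} is the endpoint splitting; once that is on the table, everything reduces to solving a single linear equation for $c$ and one trivial comparison using $b \geq 0$. Conceptually, placing weight on the two ends of the spectrum is the affine analogue of pairing each color with its reflection across the midpoint of $\setn{k}$, which is what makes the pairwise $f$-contributions balance.
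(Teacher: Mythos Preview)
Your proof is correct and follows essentially the same approach as the paper: both concentrate the splitting on the two extreme colors $1$ and $k$ with equal weight $\frac{\rho(a,b,k)}{2}$, and both rely on the fact that $f(p-1)+f(k-p)=a(k-1)+2b$ is constant in $p$. Your presentation is slightly more explicit in solving for the weight and in verifying $\rho\geq 1$, but the argument is the same.
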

\begin{proof}
  All we have to do is check \eqref{eq:lower-fstar} for this function $f$
  and an appropriate splitting of $\localparam = \rho(a,b,k)$.
  We have $f^* = a \, (k-1) + b$.
  Define $\localparam_1 \df \localparam_k \df \frac{\rho(a,b,k)}{2} = \frac{f^*}{a \, (k-1) + 2b}$
  and $\localparam_s \df 0$ for all other $s$, that is, all $s \in \setn{k} \setminus \set{1,k}$.
  We have to check
  \begin{equation*}
    \forall p \in \setn{k} \bigholds
    \frac{f^*}{a \, (k-1) + 2b} \cdot \parens[big]{f(p-1) + f(k-p)} \geq f^*
    \comma
  \end{equation*}
  that is,
  \begin{equation*}
    \forall p \in \setn{k} \bigholds
    f(p-1) + f(k-p) \geq a \, (k-1) + 2b
    \period
  \end{equation*}
  The latter is clearly true due to the definition of $f$.
\end{proof}
\begin{corollary}
  \label{cor:affine}
  Let $f$ be concave, non-constant, and non-decreasing.
  Then $\localparam(f,k) \leq \rho(a,b,k) \leq 2$, where $a \df \frac{f(k-1)-f(0)}{k-1}$
  and $b \df f(0)$.
  (Since $f$ is non-constant, $a>0$. The constant case is trivial and needs no attention.)
\end{corollary}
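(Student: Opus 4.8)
The plan is to reduce this to the affine case already settled in \autoref{affine}, using the monotonicity of the local parameter recorded in \autoref{scaling-monotonicity}. The key observation is that the affine function $g(x) \df ax + b$ with the prescribed coefficients is precisely the chord of $f$ through the points $(0,f(0))$ and $(k-1,f(k-1))$: by construction $g(0) = b = f(0)$ and $g(k-1) = a(k-1)+b = f(k-1)$. First I would check that $g$ is an admissible input to \autoref{affine}, i.e. $\fn{g}{[0,k]\map\RRnn}$ with $a \in \RRpos$ and $b \in \RRnn$. Here $b = f(0) \geq 0$ since $f$ is non-negative, $a > 0$ by non-constancy, and $g$ is increasing, so $g(x) \geq g(0) = f(0) \geq 0$ throughout $[0,k]$. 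Hence \autoref{affine} applies and gives $\localparam(g,k) \leq \rho(a,b,k) \leq 2$.

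Next I would verify the two hypotheses that allow me to invoke the monotonicity of \autoref{scaling-monotonicity}, namely $f(i) \geq g(i)$ for all $i \in \DIS$ and $f^* = g^*$. The pointwise inequality is immediate from concavity: on $[0,k-1]$ a concave function lies on or above its own chord $g$, and $\DIS = \setft{0}{k-1}$ is contained in this interval. For the equality of maxima I would exploit that both functions are non-decreasing on $\DIS$ — $f$ by hypothesis and $g$ because $a > 0$ — so both attain their maximum over $\DIS$ at the right endpoint $i = k-1$; as $g(k-1) = f(k-1)$ by construction, this yields $f^* = f(k-1) = g(k-1) = g^*$.

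With both hypotheses in place, \autoref{scaling-monotonicity} gives $\localparam(f,k) \leq \localparam(g,k)$, and combining this with the affine bound finishes the proof: $\localparam(f,k) \leq \rho(a,b,k) \leq 2$. There is no genuine obstacle here; the one step deserving care is the equality $f^* = g^*$, without which the monotonicity lemma would not apply. This is exactly where the non-decreasing hypothesis enters: it pins the common maximum at the distance $k-1$, where $f$ and $g$ agree by the very choice of $a$ and $b$.
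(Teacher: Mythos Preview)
Your proposal is correct and follows essentially the same route as the paper: define the affine chord $g(x)=ax+b$ through $(0,f(0))$ and $(k-1,f(k-1))$, use concavity for $g\leq f$ on $[0,k-1]$, use monotonicity of $f$ to pin $f^*=g^*=f(k-1)$, and then invoke \autoref{scaling-monotonicity} together with \autoref{affine}. The paper's proof is the same argument in three terse sentences; your version just spells out the hypothesis checks more carefully.
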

\begin{proof}
  Define the function $\fnx{g}{[0,k] \map \RRnn}{x \mapsto ax + b}$.
  By concavity of $f$, we have $g(x) \leq f(x)$ for all $x \in [0, \, k-1]$.
  By monotonicity of $f$, we have $f^* = g^*$.
  The corollary follows from \autoref{scaling-monotonicity}.
\end{proof}
We also treat the case of a decreasing affine and then a non-increasing concave function.
Here it makes sense to allow $f$ to assume negative values in $(k-1, \, k]$.
\begin{proposition}
  \label{decreasing}
  Let $a \in \RRpos$ and $b \in \RRnn$ and $f(x) = b - ax$, such that $f(k-1) \geq 0$.
  Then $\localparam(f,k) \leq \rho'(a,b,k) \df \frac{2b}{2b - a (k-1)}$,
  which is $2$ for the case of $f(k-1)=0$.
\end{proposition}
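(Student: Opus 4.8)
The plan is to reproduce, essentially verbatim, the endpoint splitting used in the proof of \autoref{affine}, the key observation being that for a decreasing affine function the pair-sum $f(p-1)+f(k-p)$ is again constant in $p$. First I would record the value of the target: since $f(x)=b-ax$ with $a>0$ is strictly decreasing, its maximum over the distances $\DIS=\setft{0}{k-1}$ is attained at $0$, so $f^* = f(0) = b$. The hypothesis $f(k-1)=b-a(k-1)\geq 0$ keeps $f$ non-negative on all of $\DIS$, and in fact forces $b \geq a(k-1) > 0$, so $f^*>0$ automatically; the possibly negative values of $f$ on $(k-1,k]$ are irrelevant, because only distances $\abs{s-p}\leq k-1$ will ever be evaluated.

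Next I would take the splitting $\localparam_1 \df \localparam_k \df \frac{\rho'(a,b,k)}{2}$ and $\localparam_s \df 0$ for all $s \in \setn{k}\setminus\set{1,k}$, mirroring the construction for the increasing case. To invoke the splitting lemma I must verify that this is a genuine splitting of a number $\geq 1$ and that condition \eqref{eq:lower-fstar} holds. For the former, the denominator satisfies $2b - a(k-1) \geq 2b - b = b > 0$ (using $a(k-1)\leq b$ from $f(k-1)\geq 0$), so the $\localparam_s$ are non-negative and sum to $\rho'(a,b,k)$; and since $a(k-1)\geq 0$ gives $2b-a(k-1)\leq 2b$, we get $\rho'(a,b,k)\geq 1$. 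For the latter, only the distances $\abs{1-p}=p-1$ and $\abs{k-p}=k-p$ carry weight, and by linearity $f(p-1)+f(k-p) = 2b - a(p-1) - a(k-p) = 2b - a(k-1)$ is independent of $p$, exactly as the sum $a(k-1)+2b$ was constant in \autoref{affine}. Hence for every $p \in \setn{k}$,
\[
\sum_{s=1}^{k} \localparam_s\, f(\abs{s-p})
= \frac{\rho'(a,b,k)}{2}\bigl(2b - a(k-1)\bigr)
= \frac{b}{2b - a(k-1)}\bigl(2b - a(k-1)\bigr)
= b = f^*,
\]
so \eqref{eq:lower-fstar} holds (with equality). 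The splitting lemma then gives $\localparam(f,k)\leq\rho'(a,b,k)$, and substituting $f(k-1)=0$, i.e.\ $a(k-1)=b$, yields $\rho'(a,b,k)=2$.

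I expect no substantial obstacle here: the entire argument is a mirror image of \autoref{affine}, and the only thing requiring a moment's care is confirming from the hypothesis $f(k-1)\geq 0$ that the denominator $2b-a(k-1)$ is strictly positive and that $\rho'(a,b,k)\geq 1$, which is precisely what makes the lemma applicable. The conceptual point worth stating explicitly in the write-up is that the endpoint-only splitting keeps every evaluated distance $\abs{s-p}$ inside $\DIS$, so the decision to let $f$ dip below zero on $(k-1,k]$ never interferes with the calculation.
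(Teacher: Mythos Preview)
Your proposal is correct and follows exactly the same approach as the paper: the same endpoint splitting $\localparam_1=\localparam_k=\tfrac{\rho'(a,b,k)}{2}$, the same observation that $f(p-1)+f(k-p)=2b-a(k-1)$ is constant in~$p$, and the same appeal to the splitting lemma. You actually spell out more of the well-definedness checks (positivity of the denominator, $\rho'\geq 1$) than the paper does, which merely says ``a simple calculation as in the proof of \autoref{affine}.''
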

\begin{proof}
  We have $f^* = b$.
  Define $\localparam_1 \df \localparam_k \df \frac{\rho'(a,b,k)}{2} = \frac{f^*}{2b - a (k-1)}$
  and $\localparam_s \df 0$ for all other $s$.
  Now \eqref{eq:lower-fstar} follows from a simple calculation as in the proof of \autoref{affine}.
\end{proof}
\begin{corollary}
  \label{cor:decreasing}
  Let $f$ be concave, non-constant, and non-increasing.
  Then $\localparam(f,k) \leq \rho'(a,b,k)$, where $a \df \frac{f(0)-f(k-1)}{k-1}$
  and $b \df f(0)$.
\end{corollary}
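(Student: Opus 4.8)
The plan is to mirror the argument for the non-decreasing case in \autoref{cor:affine}, replacing the increasing affine lower bound by a decreasing one and invoking \autoref{decreasing} in place of \autoref{affine}. Concretely, I would define the affine function $\fnx{g}{[0,k] \map \RR}{x \mapsto b - ax}$ with $a$ and $b$ as in the statement, so that $g$ is precisely the chord of $f$ joining $(0, f(0))$ and $(k-1, f(k-1))$: indeed $g(0) = b = f(0)$ and $g(k-1) = b - a(k-1) = f(k-1)$. As in \autoref{cor:affine}, non-constancy of $f$ on $\DIS$ forces $a > 0$ (a non-increasing concave $f$ with $f(0) = f(k-1)$ would be constant on $[0,k-1]$), and the case where $f$ is constant on $\DIS$ is trivial with local parameter $1$.

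First I would record the two facts needed to feed the monotonicity statement \autoref{scaling-monotonicity}. Since $f$ is concave, its graph lies on or above every chord, so $g(x) \leq f(x)$ for all $x \in [0,k-1]$; in particular $g(i) \leq f(i)$ for all $i \in \DIS$. Since $f$ is non-increasing we have $f^* = f(0) = b$, and since $a>0$ the function $g$ is strictly decreasing, whence $g^* = g(0) = b = f^*$. Next I would check that \autoref{decreasing} applies to $g$: it has the required form $b - ax$ with $a \in \RRpos$ and $b \in \RRnn$ (the latter because $f$ is non-negative), and the hypothesis $g(k-1) = f(k-1) \geq 0$ holds since $f$ maps into $\RRnn$. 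This yields $\localparam(g,k) \leq \rho'(a,b,k)$. Combining $g(i)\le f(i)$ on $\DIS$ with $f^* = g^*$, the monotonicity part of \autoref{scaling-monotonicity} gives $\localparam(f,k) \leq \localparam(g,k) \leq \rho'(a,b,k)$.

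The only point that deserves care — and the one place this differs from \autoref{cor:affine} — is that $g$ may take negative values on $(k-1,k]$, so it does not literally map into $\RRnn$ on all of $[0,k]$ as the statement of \autoref{scaling-monotonicity} asks. I expect this to be the sole obstacle, but it is harmless: the local parameter in \eqref{eq:anarchy-value} evaluates a function only at distances $\abs{c_i - t} \in \DIS = \setft{0}{k-1}$, and on $\DIS$ the decreasing $g$ is non-negative because $g(k-1) = f(k-1) \geq 0$. Hence the proof of \autoref{scaling-monotonicity}, which uses nothing beyond values on $\DIS$, goes through verbatim, and the chain of inequalities above completes the argument.
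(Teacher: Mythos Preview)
Your proposal is correct and follows exactly the approach the paper intends: the paper's own proof reads simply ``Like \autoref{cor:affine}'', i.e., define the affine chord $g(x)=b-ax$, use concavity for $g\le f$ on $[0,k-1]$ and non-increasing monotonicity for $f^*=g^*$, then combine \autoref{decreasing} with \autoref{scaling-monotonicity}. Your careful remark about $g$ possibly being negative on $(k-1,k]$ matches the paper's own caveat (stated just before \autoref{decreasing}) that this is permissible because only values on $\DIS$ matter.
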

\begin{proof}
  Like \autoref{cor:affine}.
\end{proof}
\begin{proposition}
  \label{cyclic}
  Let $f(x) = \min\set{x, \, k-x}$, that is, cyclic payoff.
  Then the price of anarchy is upper-bounded by $2$.
\end{proposition}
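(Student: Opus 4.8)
The plan is to obtain the bound from a single application of the splitting lemma (the preceding Lemma), using exactly the two-atom splitting announced in the introduction. I set $\localparam_1 \df 1$ and $\localparam_{\floor{k/2}+1} \df 1$, and $\localparam_s \df 0$ for every remaining $s \in \setn{k}$; this is a splitting of $\localparam = 2$. Once I verify condition \eqref{eq:lower-fstar} for this splitting, the lemma gives $\localparam(f,k) \leq 2$, and then the Remark establishing $\PoA(G,k,f) \leq \localparam(f,k)$ yields the claimed bound on the price of anarchy. Note that concavity is not really used beyond ensuring $f \geq 0$: the splitting lemma asks only for \eqref{eq:lower-fstar}.

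First I would record the elementary facts about the tent function $f(x) = \min\set{x,\,k-x}$ on the integer distances $\DIS = \setft{0}{k-1}$. It increases up to the middle and then decreases, so $f^* = \floor{k/2}$, attained at distance $\floor{k/2}$. Writing $q \df \floor{k/2}$ and using $\abs{p-1} = p-1$ for $p \in \setn{k}$, the condition \eqref{eq:lower-fstar} to be checked reduces to
\[ f(p-1) + f(\abs{p-q-1}) \geq q \quad\text{for all } p \in \setn{k}. \]

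The heart of the argument is then a case distinction on the position of $p$ relative to the second anchor $q+1$, partitioning $\setn{k}$ into $1 \leq p \leq q+1$ and $q+2 \leq p \leq k$. In the first range both distances $p-1$ and $q+1-p$ lie in $\setft{0}{q}$, hence both are at most $k/2$, so $f$ acts as the identity on each and the two contributions sum to $(p-1)+(q+1-p)=q$ exactly, meeting the bound with equality. In the second range the first distance satisfies $p-1 \geq q+1 > k/2$, so it falls on the decreasing branch and $f(p-1) = k-(p-1)$, while the second distance $p-q-1$ stays strictly below $k/2$, giving $f(\abs{p-q-1}) = p-q-1$; their sum is $k-q = \ceiling{k/2} \geq q$. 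In both ranges the sum is at least $q = f^*$, so \eqref{eq:lower-fstar} holds.

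The one point I would treat with care, and the only place where anything could go wrong, is the behaviour at the peak for odd $k$, where $\floor{k/2}$ and $\ceiling{k/2}$ differ and no integer distance equals $k/2$. The clean cut at the boundary $p = q+1$ matters: that value must sit in the first range, where $f(p-1)=p-1$ is the correct branch since $p-1 = q \leq k/2$, because the second range's use of the decreasing branch $f(p-1)=k-(p-1)$ would be wrong there for odd $k$. Keeping the partition as $\setn{k} = \setft{1}{q+1} \cup \setft{q+2}{k}$ and consistently applying the rule \enquote{identity below the middle, reflection above it} resolves this; everything else is routine floor arithmetic.
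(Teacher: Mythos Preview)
Your proposal is correct and follows essentially the same approach as the paper: the same two-atom splitting $\localparam_1=\localparam_{\floor{k/2}+1}=1$, the same case split $1\le p\le\floor{k/2}+1$ versus $\floor{k/2}+2\le p\le k$, and the same arithmetic yielding sums $\floor{k/2}$ and $\lceil k/2\rceil$ respectively. Your extra paragraph on the odd-$k$ boundary is a nice explicit check but does not add a new idea.
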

\begin{proof}
  All we have to do is check \eqref{eq:lower-fstar} for this function $f$
  and an appropriate splitting of $\localparam = 2$.
  For $i \in \NN$ denote $k_i \df \floor{\frac{k}{2}} + i$.
  Then $f^* = k_0$ and $f(x) = x$ if $x \leq k_0$ and $f(x) = k-x$ if $x \geq k_1$.
  Define $\localparam_1 \df 1$ and $\localparam_{k_1} \df 1$ and $\localparam_s \df 0$ for all other $s$.
  Condition~\eqref{eq:lower-fstar} reduces to:
  \begin{equation}
    \label{eq:cyclic-reduced}
    \forall p \in \setn{k} \bigholds
    f(p-1) + f(\abs{k_1-p}) \geq k_0
  \end{equation}
  To show \eqref{eq:cyclic-reduced}, let $p \in \setn{k}$.
  If $1 \leq p \leq k_1$, then $p-1 \leq k_0$ and $k_1-p \leq k_0$, so we have:
  \begin{equation*}
    f(p-1) + f(\abs{k_1-p})
    = f(p-1) + f(k_1-p) = p-1 + k_1-p = k_0
  \end{equation*}
  If $k_2 \leq p \leq k$, then $p-1 \geq k_2-1 = k_1$ and $p-k_1 \leq k - k_1 = \ceiling{\frac{k}{2}} - 1 \leq k_0$, so we have:
  \begin{equation*}
    f(p-1) + f(\abs{k_1-p})
    = f(p-1) + f(p-k_1)
    = k - (p-1) + p-k_1 = k - k_0 \geq k_0
  \end{equation*}
  This concludes the proof.
\end{proof}

\section{Lower Bounds}
\label{lower}
\begin{proposition}
  \label{lower-affine}
  The bound of $\rho(a,b,k)$ implied by \autoref{affine} on the price of anarchy for affine functions $f$ is the best possible,
  and the worst case is assumed already on bipartite graphs.
\end{proposition}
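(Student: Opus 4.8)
The plan is to exhibit a single small \emph{bipartite} graph carrying a stable coloring whose welfare is exactly $\frac{2mf^*}{\rho(a,b,k)}$, while the optimum attains the absolute maximum $2mf^*$; together with the upper bound $\PoA(G,k,f) \le \localparam(f,k) \le \rho(a,b,k)$ (valid for every $G$) from \autoref{affine} this forces equality. The guiding observation is that for an affine $f$ a player whose neighbors are split evenly between the two extreme colors $1$ and $k$ is indifferent among all her colors: if $v$ has $\deg(v)/2$ neighbors colored $1$ and $\deg(v)/2$ colored $k$, then for every $t \in \setn{k}$ her payoff is $\frac{\deg(v)}{2}\bigl(f(\abs{t-1}) + f(\abs{t-k})\bigr) = \frac{\deg(v)}{2}\bigl(a(t-1)+b + a(k-t)+b\bigr) = \deg(v)\bigl(\frac{a(k-1)}{2}+b\bigr)$, independently of $t$. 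Since $\frac{a(k-1)}{2}+b = f^*/\rho(a,b,k)$, such a player earns exactly the threshold payoff $\deg(v)\, f^*/\rho(a,b,k)$, and any coloring realizing the even split at \emph{every} vertex is automatically stable.

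First I would realize this even-split condition on the $4$-cycle $v_1\,v_2\,v_3\,v_4$, which is bipartite and isolated-vertex-free, using the coloring $c$ with $c(v_1)=c(v_2)=1$ and $c(v_3)=c(v_4)=k$. A direct check shows each $v_i$ has exactly one neighbor of color $1$ and one of color $k$, so by the computation above $c$ is stable, and the four edges contribute $f(0)=b$ on $\set{v_1,v_2}$ and $\set{v_3,v_4}$ and $f(k-1)=f^*$ on $\set{v_2,v_3}$ and $\set{v_4,v_1}$. Hence $\welf(c) = 2\bigl(2b + 2f^*\bigr) = 4\bigl(a(k-1)+2b\bigr)$.

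Next I would pin down the optimum. As each of the $m=4$ edges contributes at most $f^*$, we have $\welf_{\OPT} \le 2mf^* = 8f^*$, and coloring the two bipartition classes $\set{v_1,v_3}$ and $\set{v_2,v_4}$ with $1$ and $k$ makes every edge have distance $k-1$, attaining $8f^*$; thus $\welf_{\OPT} = 8f^* = 8\bigl(a(k-1)+b\bigr)$. Dividing, $\frac{\welf_{\OPT}}{\welf(c)} = \frac{8(a(k-1)+b)}{4(a(k-1)+2b)} = \rho(a,b,k)$, so $\PoA(C_4,k,f) \ge \rho(a,b,k)$; combined with the upper bound this yields $\PoA(C_4,k,f) = \rho(a,b,k)$ on a bipartite graph, proving the claim.

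The only genuinely creative step is the first one: recognizing that an even split between the spectrum's endpoints neutralizes an affine player's incentive to move, and that this can be enforced at every vertex simultaneously by a properly arranged $4$-cycle. Everything afterwards is the routine arithmetic sketched above, so I expect no real obstacle there. Should a scalable family be wanted, the same even-split coloring is stable on any graph admitting a partition $V = S_1 \cup S_k$ in which every vertex has equally many neighbors in each part, but $C_4$ already delivers exact tightness.
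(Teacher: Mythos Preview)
Your proof is correct and takes essentially the same approach as the paper: both use the $4$-cycle $C_4 = K_{2,2}$ and exhibit a stable coloring with welfare exactly $4\bigl(a(k-1)+2b\bigr)$ against an optimum of $8\bigl(a(k-1)+b\bigr)$. The only difference is the choice of stable coloring---the paper colors one bipartition class with $1,k$ and the other with the middle colors $\lfloor\tfrac{k+1}{2}\rfloor,\lceil\tfrac{k+1}{2}\rceil$, whereas your ``even-split'' coloring $(1,1,k,k)$ around the cycle makes every vertex indifferent; your stability argument is arguably cleaner since all four vertices are in the same symmetric situation.
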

\begin{proof}
  We give an instance of the graph coloring game with function $f(x) = ax + b$
  that has price of anarchy $\rho(a,b,k)$.
  Consider the complete bipartite graph $K_{2,2}$
  and denote $\set{u_1,u_2}$ the vertices of one partition
  and $\set{w_1,w_2}$ those of the other (so edges are all $\set{u_i,w_j}$ with $i,j\in\set{1,2}$).
  Define coloring $c$ by:
  \begin{equation*}
    c(u_1) \df 1 \quad
    c(u_2) \df k \quad
    c(w_1) \df \floor{\tfrac{k+1}{2}} \quad
    c(w_2) \df \ceiling{\tfrac{k+1}{2}}
  \end{equation*}
  It is easy to see that $c$ is stable:
  players $w_1$ and $w_2$ have payoff $a \, (k-1) + 2 b$ each, no matter which color they choose.
  Players $u_1$ and $u_2$ also have payoff $a \, (k-1) + 2b$ each, but only for colors $1$ and $k$;
  for all other colors they get less.
  An optimal coloring is obtained by $c(u_i) \df 1$ and $c(w_i) \df k$ for $i \in \set{1,2}$,
  with each edge giving contribution $a\,(k-1)+b$.
  We have, using the number $m=4$ of edges,
  \begin{equation*}
    \frac{\welf_{\OPT}}{\welf(c)} 
    = \frac{2 \cdot 4 \cdot (a \, (k-1) + b)}{4 \cdot (a \, (k-1) + 2b)}
    = \rho(a,b,k)
    \period\tag*{\qedhere}
  \end{equation*}
\end{proof}
\begin{proposition}
  The bound of $\rho'(a,b,k)$ implied by \autoref{decreasing} on the price of anarchy
  for affine decreasing functions $f$ is the best possible,
  and the worst case is assumed already on bipartite graphs.
\end{proposition}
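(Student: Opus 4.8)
The plan is to mirror the construction in \autoref{lower-affine}, again taking the complete bipartite graph $K_{2,2}$ with partitions $\set{u_1,u_2}$ and $\set{w_1,w_2}$ (so $m=4$ edges, all $\set{u_i,w_j}$), but with a coloring adapted to the fact that $f(x)=b-ax$ is now \emph{decreasing}: its maximum $f^*=b$ sits at distance $0$, so players are now rewarded for clustering rather than for spreading out. This reverses the intuition of the increasing case, and it is exactly where the main care is needed. One cannot simply park players at the extreme colors $1$ and $k$ and expect them to stay, since each would prefer to move toward its neighbors; the equilibrium must instead be engineered so that every player is indifferent.

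The coloring I would use is $c(u_1)\df c(w_1)\df 1$ and $c(u_2)\df c(w_2)\df k$. The key observation, which replaces the ``middle players are indifferent'' phenomenon of the increasing case, is that in $K_{2,2}$ every vertex is adjacent to exactly one vertex colored $1$ and one colored $k$. Hence for any vertex $v$ and any color $t\in\setn{k}$, its payoff is $f(\abs{t-1})+f(\abs{t-k}) = 2b - a\,(\abs{t-1}+\abs{t-k}) = 2b - a\,(k-1)$, because $\abs{t-1}+\abs{t-k}=k-1$ is constant over $t\in\set{1,\hdots,k}$. So every player is perfectly indifferent among all colors, which makes $c$ stable and at the same time pins down $\welf(c)$.

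Next I would compute the two welfare values. For the stable coloring, each of the four vertices contributes payoff $2b-a\,(k-1)$, so $\welf(c) = 4\,(2b-a\,(k-1))$ (equivalently, two edges have distance $0$ and two have distance $k-1$, which gives the same total through \eqref{eq:edgewise}). For the optimum, all four vertices take the same color, so every edge has distance $0$ and contribution $f^*=b$; this attains the trivial upper bound $\welf_{\OPT} \leq 2mf^* = 8b$, hence it is optimal. Dividing yields $\welf_{\OPT}/\welf(c) = 8b/\parens{4\,(2b-a\,(k-1))} = 2b/(2b-a\,(k-1)) = \rho'(a,b,k)$, matching the upper bound of \autoref{decreasing} and showing that the worst case is already achieved on a bipartite graph.

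The only genuinely delicate point is the second step: recognizing that for a decreasing $f$ the way to force a bad-but-stable equilibrium is \emph{not} to distribute players across the spectrum (they would collapse toward one another) but to confront each player simultaneously with both extremes, so that the sum of its two neighbor-distances is forced to equal $k-1$ no matter what the player chooses. Once this indifference is in place, the remainder is the same bookkeeping as in \autoref{lower-affine}, so I would keep both the stability verification and the welfare computation brief.
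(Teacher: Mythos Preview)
Your proposal is correct and is essentially identical to the paper's own proof: the paper also uses $K_{2,2}$ with the coloring $c(u_1)=c(w_1)=1$ and $c(u_2)=c(w_2)=k$, notes that it is stable with $\welf(c)=4\,(2b-a(k-1))$, and compares against the optimum $8b$ attained when all players share one color. Your indifference argument via $\abs{t-1}+\abs{t-k}=k-1$ simply spells out what the paper leaves as ``easy to see.''
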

\begin{proof}
  We use the same graph as in the proof of \autoref{lower-affine}.
  However, we define coloring $c$ by:
  \begin{equation*}
    c(u_1) \df 1 \quad
    c(u_2) \df k \quad
    c(w_1) \df 1 \quad
    c(w_2) \df k
  \end{equation*}
  It is easy to see that this is stable with $\welf(c) = 4 \, (2b-a(k-1))$.
  Since the optimum is $8b$, when all players choose the same color,
  a price of anarchy of $\frac{2b}{2b-a(k-1)}=\rho'(a,b,k)$ follows.
\end{proof}
\begin{proposition}
  The bound of $2$ implied by \autoref{cyclic} on the price of anarchy for cyclic payoff is the best possible for even $k$,
  and the worst case is assumed already on a cycle of even length.
  For odd $k$, we have a lower bound of $\frac{3}{2} \parens{1-\frac{1}{k}}$.
\end{proposition}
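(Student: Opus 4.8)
We must establish a matching lower bound for cyclic payoff, i.e.\ exhibit graphs on which the price of anarchy approaches (or equals) the upper bound. For even $k$, the claim is that the bound of $2$ is tight, witnessed by an even cycle; for odd $k$, we must produce a lower bound of $\frac{3}{2}(1-\frac{1}{k})$. Since $\PoA$ is a max over stable colorings, it suffices in each case to construct \emph{one} explicit graph together with \emph{one} stable coloring $c$ and compute the ratio $\welf_{\OPT}/\welf(c)$, bounding $\welf_{\OPT}$ from below by the welfare of a good coloring and verifying stability of $c$ directly from \eqref{eq:def-stable}.

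\textbf{The even case.}
Recall $f^*=\frac{k}{2}$ and that $f(x)=\min\set{x,k-x}$ is maximized exactly at $x=\frac{k}{2}$. The plan is to take an even cycle $C_{2\ell}$ (a $2$-regular bipartite graph) and color it alternately with two colors $p$ and $q$ whose distance $\abs{p-q}$ is as small as possible while every vertex is already optimally reacting. Concretely I would try the alternating coloring using colors $1$ and $\frac{k}{2}+1$, so that each edge has contribution $f(\frac{k}{2})=\frac{k}{2}=f^*$ — wait, that would be optimal, not worst-case. The worst-case coloring instead should use two colors at \emph{distance $1$}, say all vertices colored $1$ and $2$ alternately: then every edge contributes $f(1)=1$, giving $\welf(c)=2\cdot 2\ell\cdot 1$. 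Stability must be checked: a vertex with both neighbors colored (say) $2$ sees neighbor-color-multiset $\set{2,2}$; its payoff at color $1$ is $2f(1)=2$, and I must verify no color $t$ yields payoff exceeding $2$, i.e.\ $2f(\abs{t-2})\le 2$, which forces $f(\abs{t-2})\le 1=f(1)$ for all $t$ — this is \emph{false} in general (e.g.\ $t$ at distance $\frac{k}{2}$ gives $\frac{k}{2}>1$). So this naive coloring is not stable, and the real work is to find a stable worst-case coloring.

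\textbf{The genuine obstacle, and how I would resolve it.}
The difficulty is that on a cycle each vertex has only two neighbors, and for $c$ to be stable every vertex must already be best-responding to its two neighbor colors; by the splitting argument a best response always attains payoff at least $\frac{2 f^*}{2}=f^*=\frac{k}{2}$, so a stable $2$-regular coloring has $\welf(c)\ge \frac{2\ell\cdot k}{2}\cdot\ldots$ — one must ensure the ratio still reaches $2$. The resolution is to choose the alternating colors $p,q$ so that the shared best response of each vertex is \emph{exactly} the color it already holds, with per-edge contribution exactly $\frac{f^*}{2}=\frac{k}{4}$ summed over two edges giving $f^*$. The natural candidate is colors at distance $\frac{k}{2}$ \emph{measured the ``wrong'' way}: pick $c$ alternating between $1$ and $k$, so $\abs{1-k}=k-1$ and $f(k-1)=\min\set{k-1,1}=1$; each vertex then sees two neighbors at color-value differing from it by $k-1$, and I must verify that no deviation beats the current payoff $2f(k-1)=2$. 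The key verification is the inequality, for all $t\in\setn{k}$ and a vertex at (say) color $1$ with both neighbors at $k$: $f(\abs{t-k})\le f(\abs{1-k})=f(k-1)=1$. Because $f(x)=\min\set{x,k-x}$ and $\abs{t-k}\le k-1$ with equality essentially only near $t=1$, this needs a careful case check — this stability verification is the main obstacle, and I expect it to \emph{fail} for the plain even cycle, forcing either a more clever gadget (as in the $K_{2,2}$ constructions of \autoref{lower-affine}) or a cycle with a carefully chosen color pair. I would therefore mirror the $K_{2,2}$ strategy: use the two ``extreme'' colors $1$ and $k$ on one side analogous to $u_1,u_2$ and two ``central'' colors $\floor{\frac{k+1}{2}},\ceiling{\frac{k+1}{2}}$ on the other, adapted to the cyclic metric, and show the resulting alternating coloring on an even cycle is stable with welfare exactly half the optimum.

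\textbf{The odd case.}
For odd $k$ the maximum $f^*=\frac{k-1}{2}$ is attained at \emph{two} distances, $\frac{k-1}{2}$ and $\frac{k+1}{2}$, and the arithmetic no longer splits cleanly into two equal halves; this is exactly why the bound degrades to $\frac{3}{2}(1-\frac{1}{k})=\frac{3(k-1)}{2k}$. The plan is to again build a small regular bipartite gadget (or cycle) and a stable coloring whose per-edge contribution is a fixed fraction of $f^*$, with the optimum still achieving $f^*$ per edge. I would compute $\welf_{\OPT}$ as (number of edges)$\cdot 2f^*$, construct a stable $c$ whose every edge contributes exactly $\frac{2}{3}f^*\cdot\frac{1}{1-1/k}$-worth so that the ratio collapses to $\frac{3(k-1)}{2k}$, and verify stability by the same per-vertex best-response check as above; the arithmetic discrepancy from the two-point maximizer set $\DIS^*(f)$ is precisely what prevents reaching $2$ and yields the stated factor.
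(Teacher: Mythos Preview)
Your proposal is exploratory rather than a proof: you correctly rule out several candidate colorings but never arrive at one that works, and for the odd case you give no concrete construction at all.

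The central missing idea for even $k$ is this: an \emph{alternating} two-coloring on the cycle cannot be simultaneously stable and bad, because if both neighbors of a vertex carry the same color $q$, then the vertex can always deviate to a color at cyclic distance $\floor{k/2}$ from $q$ and collect payoff $2f^*$. You observed this failure for the pairs $(1,2)$ and $(1,k)$ but then veered toward a $K_{2,2}$ gadget rather than fixing the cycle construction. The paper's trick is to abandon alternation and use the \emph{blockwise} pattern $1,1,k_1,k_1,1,1,k_1,k_1,\ldots$ on a cycle of length $4n$, where $k_1=\frac{k}{2}+1$. Now every vertex has one neighbor of its own color and one neighbor at distance $k_0=\frac{k}{2}$; its current payoff is $0+k_0=k_0$, and any deviation $t$ trades off the two contributions so that their sum stays exactly $k_0$ (this is a two-line case check using $f(x)=x$ on $[0,k_0]$ and $f(x)=k-x$ on $[k_1,k-1]$). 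Hence the coloring is stable with welfare exactly half of the optimum obtained by true alternation $1,k_1,1,k_1,\ldots$, giving the ratio~$2$.

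For odd $k$ the paper likewise gives an explicit pattern, namely the $3$-periodic coloring $1,k_1,k_2,1,k_1,k_2,\ldots$ on a cycle of length $6n$ (with $k_i=\floor{k/2}+i$). The edge contributions in one period are $k_0,1,k_0$, so welfare is $2n(2k_0+1)=2nk$, while the alternating optimum gives $2\cdot 6n\cdot k_0 /2 = 6nk_0$; the ratio is $\frac{6nk_0}{4nk_0+2n}=\frac{3k_0}{2k_0+1}=\frac{3}{2}\bigl(1-\frac{1}{k}\bigr)$. Stability is again a direct per-vertex check against all deviations $t$, split into the ranges where $f$ is linear. Your plan for the odd case (``build a small regular bipartite gadget \ldots\ compute \ldots'') contains none of this and would not lead you to the $3$-periodic construction, which is the actual content of the argument.
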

\begin{proof}
  Again for each $i \in \NN_0$ denote $k_i \df \floor{\frac{k}{2}} + i$.
  First let $k$ be even.
  Consider a cycle of length $4n$ for some $n \in \NNone$
  and color like so:
  \begin{equation*}
    1,1,k_1,k_1,1,1,k_1,k_1,\hdots
  \end{equation*}
  Then half of the edges have contribution $0$, namely between players of the same color,
  and the other half has contribution $k_0$ each, so the welfare is $nk_0$.
  We prove that this coloring is stable.
  Let $v$ be a player with $c(v)=1$. Her payoff is $k_0$.
  If she changes to a color $2 \leq t \leq k_1$, her new payoff will be
  $(t-1) + (k_1-t) = k_1-1 = k_0$, so no improvement.
  If she changes to a color $k_1+1 \leq t \leq k$, her new payoff will be
  $(t-k_1) + k-(t-1) = k-k_1+1 = k_0 -1+1=k_0$, so also no improvement.
  The case $c(v)=k_1$ is treated likewise.
  An optimal coloring uses $1$ and $k_1$ alternately and yields welfare $2nk_0$.
  This proves the claim.\footnote{%
    The above construction is not stable for odd $k$,
    since then for example a player with color $k_1$ could change to $k_2$:
    this would not change the contribution of the edge to the $1$-colored neighbor (it remains $k_0$)
    but would increase distance from $0$ to $1$ regarding the $k_1$-colored neighbor, hence increasing payoff by~$1$.}
  \par
  For odd $k$, we take a cycle of length $6n$ for some $n \in \NNone$ and color like so:
  $1,k_1,k_2,1,k_1,k_2,\hdots$.
  The pattern $1,k_1,k_2$ can be repeated an integral number of times
  since the number of vertices is a multiple of~$3$.
  This yields welfare $2n \, (k_0+1+k_0)=4nk_0+2n$, so in comparison with the optimum
  (still attained by using $1$ and $k_1$ alternately, since number of vertices is even) we have
  \begin{equation*}
    \frac{6nk_0}{4nk_0+2n} = \frac{3}{2} \parens{1 - \frac{1}{2k_0+1}} = \frac{3}{2} \parens{1 - \frac{1}{k}} \period
  \end{equation*}
  We prove that this coloring is stable.
  Let $v$ be a player with $c(v)=1$. Her payoff is $2 k_0$.
  If she changes to color $t$ with $2 \leq t \leq k_1$, her new payoff will be
  $(k_1 - t) + (k_2 - t) = 2k_0 + 3 - 2t \leq 2k_0 - 1$, so no improvement.
  If she changes to color $t$ with $k_2 \leq t \leq k = 2k_0 + 1$, her new payoff will be
  $(t - k_1) + (t - k_2) = 2t - 2k_0 - 3 \leq 2 (2k_0+1) - 2k_0 - 3 = 2k_0-1$, so also no improvement.
  \par
  Now let $c(v)=k_1$. Her payoff is $k_0+1=k_1$.
  If she changes to color $t$ with $2 \leq t \leq k_1-1=k_0$, her new payoff will be
  $(t-1) + (k_2-t) = k_2 - 1 = k_1$, so it is no improvement.
  If she changes color to $1$, then her new payoff will be $k-(k_2-1)=k-k_1=k-k_0-1=k_1-1=k_0$,
  so no improvement; note that $k-k_0=k_1$.
  If she changes to color $t$ with $k_1 + 1 \leq t \leq k$, her new payoff will be
  $k-(t-1)+(t-k_2)=k+1-k_0-2=k_1-1$, so also no improvement.
  The case $c(v)=k_2$ can be treated likewise and is omitted here.
\end{proof}

\section{General Concave $f$}
\label{general}
We define a family of \enquote{prototype} concave functions.
For each $\ell \in \NN$ with $1 \leq \ell < k-1$ define:
\begin{equation}
  \label{eq:fl}
  \fnx{f_\ell}{[0, k] \map \RRnn}{x \mapsto
    \begin{cases}
      \tfrac{x}{\ell} & \text{if $x \leq \ell$} \\
      \tfrac{k-x}{k-\ell} & \text{if $x \geq \ell$}
    \end{cases}}
\end{equation}
So this function rises in a linear fashion from $0$ until it reaches value $1$ in $\ell$,
and then it drops in a linear fashion until it reaches value $0$ in $k$.
Clearly, $f^*_\ell = 1$.
For even $k$ and $\ell = \frac{k}{2}$, this function is that for cyclic payoff.
The following remark shows how to transfer bounds on the local parameter on members of this familiy
to general concave functions.
Note that the cases $0 \in \DIS^*(f)$ and $k-1 \in \DIS^*(f)$ describe monotone functions $f$
and have been covered in \autoref{splitting} already (monotone on $[0, \, k-1]$, which is sufficient).
\begin{remark}
  Let $\fn{f}{[0,k]\map\RRnn}$ be concave.
  Then $\localparam(f,k) \leq \localparam(f_\ell,k)$ for all $\ell \in \DIS^*(f) \setminus \set{0,k-1}$.
\end{remark}
\begin{proof}
  Let $\ell \in \DIS^*(f) \setminus \set{0,k-1}$.
  Define $g \df \frac{f}{f^*}$, then $g^* = 1$ and $g(\ell) = 1 = f_\ell(\ell)$,
  so $g^* = f_\ell^*$.
  By concavity,\footnote{%
    Due to the particular shape of $f_\ell$, in order for this concavity argument to work,
    it is important that $f$ is concave on $[0,k]$ and not only on $[0,k-1]$.}
  $f_\ell(x) \leq g(x)$ for all $x \in [0,k]$.
  Applying \autoref{scaling-monotonicity} two times yields:
  $\localparam(f,k) = \localparam\parens{\tfrac{f}{f^*}} = \localparam(g,k) \leq \localparam(f,k)$.
\end{proof}
\begin{theorem}
  \label{left}
  For $\ell < \ceiling{\frac{k}{2}}$ we have $\localparam(f_\ell,k) \leq 2$.
\end{theorem}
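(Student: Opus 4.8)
The plan is to apply the splitting lemma of \autoref{splitting} with $\localparam = 2$ and a two-point splitting. I would first observe that, for a two-point splitting, the two support points are essentially forced to lie at distance exactly $\ell$ apart: evaluating \eqref{eq:lower-fstar} at $p$ equal to one of the support points kills the corresponding term (because $f_\ell(0)=0$), so the other point must sit at the unique distance where $f_\ell$ attains $f^*=1$, namely $\ell$. The key design decision is then \emph{where} to place this pair. Rather than anchoring the lower point at color $1$, I anchor the upper point at the middle color $\ceiling{\frac{k}{2}}$. Concretely, set $\localparam_{\ceiling{k/2}-\ell} \df 1$ and $\localparam_{\ceiling{k/2}} \df 1$ and $\localparam_s \df 0$ otherwise. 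The hypothesis $\ell < \ceiling{\frac{k}{2}}$ is exactly what guarantees $\ceiling{\frac{k}{2}}-\ell \geq 1$, so both indices are legal colors, and $\ceiling{\frac{k}{2}} \leq k$ anyway.

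Writing $s_1 \df \ceiling{\frac{k}{2}}-\ell$ and $s_2 \df \ceiling{\frac{k}{2}} = s_1 + \ell$, the heart of the verification is a single one-variable inequality. For $s_1 \leq p \leq s_2$, both distances $\abs{s_1-p}$ and $\abs{s_2-p}$ lie in $[0,\ell]$, so they sit on the rising branch of $f_\ell$ and sum to $(\abs{s_1-p}+\abs{s_2-p})/\ell = (s_2-s_1)/\ell = 1$. For $p \leq s_1$ I would set $D \df s_1 - p \geq 0$, which makes the two distances $D$ and $D+\ell$; symmetrically, for $p \geq s_2$ I set $D \df p - s_2 \geq 0$, again giving distances $D$ and $D+\ell$. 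So in both outer cases the left-hand side of \eqref{eq:lower-fstar} equals $\psi(D) \df f_\ell(D) + f_\ell(D+\ell)$, and everything reduces to showing $\psi(D) \geq 1$.

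I would prove $\psi(D) \geq f^* = 1$ for all $D \in [0,\frac{k}{2}]$ by a short case split, using that $\ell < \ceiling{\frac{k}{2}}$ forces $\ell < \frac{k}{2}$, hence $\ell < k-\ell$. For $D \leq \ell$ we have $\psi(D) = D/\ell + (k-\ell-D)/(k-\ell)$, which equals $1$ at $D=0$ and has positive slope $1/\ell - 1/(k-\ell)$, hence is $\geq 1$. For $\ell \leq D \leq \frac{k}{2}$ both arguments exceed $\ell$, so $\psi(D) = (2k-2D-\ell)/(k-\ell)$, which is affine decreasing in $D$ and equals $1$ at $D = \frac{k}{2}$, hence is $\geq 1$. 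Finally I would check that the values of $D$ that actually occur stay in $[0,\frac{k}{2}]$: on the left, $D = s_1 - p \leq s_1 - 1 = \ceiling{\frac{k}{2}}-\ell-1 \leq \frac{k}{2}$, and on the right, $D = p - s_2 \leq k - \ceiling{\frac{k}{2}} = \floor{\frac{k}{2}} \leq \frac{k}{2}$. This establishes \eqref{eq:lower-fstar} for all $p \in \setn{k}$, and the splitting lemma yields $\localparam(f_\ell,k) \leq 2$.

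The main obstacle is the placement of the support pair. The naive choice $s_1 = 1$, $s_2 = \ell+1$ — which is exactly what works for cyclic payoff in \autoref{cyclic}, where $\ell = \frac{k}{2}$ — fails here: for small $\ell$ the falling branch of $f_\ell$ is long and shallow, so a neighbor at the far color $p = k$ would receive too little from both support points (one checks $f_\ell(k-1)+f_\ell(k-1-\ell) < 1$ once $\ell < \frac{k}{2}-1$). Anchoring the upper point at $\ceiling{\frac{k}{2}}$ instead is precisely what balances the two extreme neighbor colors $p=1$ and $p=k$, and the left--right symmetry it creates is what collapses both outer cases into the single inequality $\psi(D)\geq 1$. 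The remaining work is then just the elementary, monotone-branch analysis of $\psi$ sketched above.
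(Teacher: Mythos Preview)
Your proof is correct and uses exactly the same splitting as the paper, namely $\localparam_{\ceiling{k/2}} = \localparam_{\ceiling{k/2}-\ell} = 1$. Your verification of \eqref{eq:lower-fstar} is organized a bit more economically than the paper's five-case split on $p$: by exploiting the symmetry of the two outer regions you collapse them into the single inequality $\psi(D)\geq 1$ on $[0,\tfrac{k}{2}]$, but the underlying computations are the same.
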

\begin{proof}
  For each $i \in \NN_0$ denote $k_i \df \floor{\frac{k}{2}} + i$
  and also $k' \df \ceiling{\frac{k}{2}}$.
  Define the following splitting: $\localparam_{k'} \df 1$ and $\localparam_{k'-\ell} \df 1$ and $\localparam_s \df 0$ for all other $s$.
  Let $p \in \setn{k}$. We have to show:
  \begin{equation*}
    \phi(p) \df f\parens{\abs{k'-p}} + f\parens{\abs{k'-\ell-p}} \geq 1
  \end{equation*}
  The following observations help to make the necessary case distinction:
  \begin{align*}
    \abs{k' - p} \leq \ell & \iff k' - \ell \leq p \leq k' + \ell \\
    \abs{k' - \ell - p} \leq \ell & \iff k' - 2 \ell \leq p \leq k'
  \end{align*}
  \begin{description}
  \item[Case $1 \leq p < k' - 2 \ell$:]
    \begin{equation*}
      \phi(p) 
      = \frac{k-(k'-p) + k-(k'-\ell-p)}{k-\ell}
      = \frac{2(k-k'+p)+\ell}{k-\ell}
      \geq \frac{2(k_0+1)+\ell}{k-\ell}
      \geq \frac{k+\ell}{k-\ell} > 1
    \end{equation*}
  \item[Case $k' - 2 \ell \leq p < k' - \ell$:]
    \begin{equation*}
      \phi(p) 
      = \frac{k-(k'-p)}{k-\ell} + \frac{k'-\ell-p}{\ell}
      \geq \frac{k-(k'-p)+k'-\ell-p}{k-\ell} = 1
    \end{equation*}
  \item[Case $k' - \ell \leq p \leq k'$:]
    \begin{equation*}
      \phi(p) = \frac{k'-p + p-k'+\ell}{\ell} = 1
    \end{equation*}
  \item[Case $k' < p \leq k'+\ell$:]
    \begin{equation*}
      \phi(p) = \frac{p-k'}{\ell} + \frac{k-(p-k'+\ell)}{k-\ell}
      \geq \frac{p-k'+k-(p-k'+\ell)}{k-\ell} = 1
    \end{equation*}
  \item[Case $k'+\ell < p \leq k$:]
    \begin{equation*}
      \phi(p) = \frac{k-(p-k') + k-(p-k'+\ell)}{k-\ell}
      = \frac{2 (k+k'-p) - \ell}{k-\ell}
      \geq \frac{2 k' - \ell}{k-\ell}
      \geq \frac{k - \ell}{k-\ell} = 1
      \tag*{\qedhere}
    \end{equation*}
  \end{description}
\end{proof}
In the following, we will treat the case of $\ell > \floor{\frac{k}{2}}$.
The only case thus not covered here is when $k$ is even and $\ell = \frac{k}{2}$,
but this has been covered in \autoref{splitting} since it coincides with cyclic payoff.
Hence so far we have $\localparam(f_\ell,k) \leq 2$ for all $\ell \leq \floor{\frac{k}{2}}$.
For the remaining cases, we start with a technical preparation.
\begin{proposition}
  \label{between}
  Let $\floor{\frac{k}{2}} < \ell \leq k-1$, in particular $k \geq 3$.
  Then there is an integer $i \in \NN$ with $2 \leq i \leq \ell$ such that:
  \begin{equation}
    \label{eq:between}
    \frac{(k-\ell) \, (2\ell-k)}{\ell}
    \leq i \leq
    \frac{\ell \, (k-\ell)}{2\ell-k} + 1
  \end{equation}
\end{proposition}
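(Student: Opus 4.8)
The plan is to reduce the statement to the elementary principle that every real interval of length at least $1$ contains an integer, by means of a change of variables that makes the two bounds symmetric. I would set $a \df k - \ell$ and $b \df 2\ell - k$. From $\floor{\frac{k}{2}} < \ell \leq k-1$ one gets $a \geq 1$ directly, and $b \geq 1$ by checking the two parities of $k$ (for $\ell \geq \floor{\frac{k}{2}}+1$ we have $2\ell - k \geq 1$ whether $k$ is even or odd); moreover $a + b = \ell$, so in particular $\ell \geq 2$. Writing $L$ and $U$ for the lower and upper bound in \eqref{eq:between}, this substitution recasts them as
\begin{equation*}
  L = \frac{ab}{a+b} \qquad\text{and}\qquad U = \frac{a(a+b)}{b} + 1 \period
\end{equation*}
The task becomes: find an integer $i$ with $2 \leq i \leq \ell$ and $L \leq i \leq U$.

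Next I would record three inequalities about $L$ and $U$. First, $L \leq \frac{\ell}{4}$: since $ab \leq \frac{(a+b)^2}{4}$, we get $L = \frac{ab}{a+b} \leq \frac{a+b}{4} = \frac{\ell}{4}$; in particular $L \leq \ell$. Second, $U > 2$: because $a \geq 1$ and $a+b > b$, the fraction $\frac{a(a+b)}{b}$ exceeds $1$. Third, and crucially, the interval has length exceeding $1$:
\begin{equation*}
  U - L = \frac{a(a+b)}{b} - \frac{ab}{a+b} + 1 = a \cdot \frac{(a+b)^2 - b^2}{b(a+b)} + 1 > 1 \comma
\end{equation*}
where the middle fraction is strictly positive since $a,b \geq 1$ force $(a+b)^2 - b^2 = a(a+2b) > 0$.

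Finally I would exhibit a concrete integer. The natural candidate is $i \df \max\set{2, \ceiling{L}}$. It satisfies $i \geq 2$ and $i \geq \ceiling{L} \geq L$ by construction, giving the lower bound. Since $L \leq \ell$ and $\ell$ is an integer, $\ceiling{L} \leq \ell$, and together with $2 \leq \ell$ this yields $i \leq \ell$. For the upper bound $i \leq U$: if $i = 2$, then $i < U$ by the second inequality; otherwise $i = \ceiling{L} < L + 1 < U$, where the last step uses $U - L > 1$. Thus $i$ meets all requirements.

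The only genuine content is the length estimate $U - L > 1$, which is what guarantees an integer can be caught between the bounds; the symmetric identity $a + b = \ell$ obtained from the substitution is exactly what makes that estimate transparent. I expect no real obstacle beyond the small parity check that establishes $b = 2\ell - k \geq 1$, and the routine verification that the chosen integer stays below $\ell$ rather than merely below $U$.
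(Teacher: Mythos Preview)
Your proof is correct and follows the same overall strategy as the paper --- show that the interval $[L,U]$ has length exceeding $1$ (so it must contain an integer) and separately check that such an integer can be chosen in $\setft{2}{\ell}$ --- but your execution is different and cleaner. The paper substitutes $\rho = \ell/k$ and treats the two bounds as functions $h_1(\rho)$, $h_2(\rho)$ on $(\tfrac{1}{2},1]$, then uses real-analysis arguments (monotonicity via first derivatives, a concavity check, and a limit as $\rho \to \tfrac{1}{2}$) to verify $h_2-h_1 \geq 1$, $h_1 \leq \ell$, and $h_2 \geq 2$. Your substitution $a=k-\ell$, $b=2\ell-k$ with the pleasant identity $a+b=\ell$ turns everything into short algebra: the length estimate becomes the one-line factorization $U-L-1 = a\,\frac{(a+b)^2-b^2}{b(a+b)} > 0$, the bound $L \leq \ell$ is immediate from AM--GM, and $U>2$ drops out of $a(a+b)>b$. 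This is strictly more elementary than the paper's calculus-based treatment and avoids any appeal to derivatives or limiting behaviour; the paper's parametrization in $\rho$ buys nothing here that your integer substitution does not already deliver.
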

\begin{proof}
  For $\rho \in (\frac{1}{2}, 1]$ define:
  \begin{equation*}
    h_1(\rho) \df
    \frac{(1-\rho) \, k \, (2 \rho k - k)}{\rho k}
    = \frac{(1-\rho) \, (2 \rho - 1)}{\rho} \, k
    = \parens{3 - 2 \rho - \tfrac{1}{\rho}} \, k
  \end{equation*}
  and:
  \begin{equation*}
    h_2(\rho) \df
    \frac{\rho k \, (1-\rho) \, k}{2\rho k - k} + 1
    = \frac{\rho - \rho^2}{2\rho - 1} \, k + 1
    = \frac{1 - \rho}{2 - \frac{1}{\rho}} \, k + 1
  \end{equation*}
  That is, conceptually, we replace $\ell$ by $\rho k$.
  The value of $\rho$ corresponding to the maximum $\ell$, namely $k-1$, is $1-\frac{1}{k}$.
  \par
  It is easy to see that if there is an integer $i$ with \eqref{eq:between},
  then this $i$ can be chosen so that $2 \leq i \leq \ell$:
  we have $h_1(\rho) \leq \rho k$ for all $\rho$ (since the function $\rho \mapsto 3 - 3 \rho - \frac{1}{\rho}$
  is concave with a negative value at its maximum $\rho = \frac{1}{\sqrt{3}}$);
  and moreover $h_2(\rho) \geq 2$ for all $\rho \leq 1-\frac{1}{k}$ since $h_2(1-\frac{1}{k}) \geq 2$ and $h_2$ is non-increasing.
  \par
  Finall we note that $h(\rho) \df h_2(\rho) - h_1(\rho) - 1 \geq 0$ for all $\rho$;
  in other words, we show that there is space for at least one integer between the two bounds.
  This follows from the three facts: $h(1) = 0$ and $h$ is non-increasing (seen by looking at its first derivative)
  and $h(\rho) \tends \infty$ as $\rho \tends \frac{1}{2}$.
\end{proof}
\begin{theorem}
  For $\ell > \floor{\frac{k}{2}}$ we have $\localparam(f_\ell,k) \leq 3$.
\end{theorem}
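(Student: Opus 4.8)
The plan is to invoke the splitting technique once more, now with three unit masses, placing them at the colours $1$, $i$, and $1+\ell$, where $i$ is the integer supplied by \autoref{between}. Since $\ell<k-1$ by the definition of $f_\ell$ in \eqref{eq:fl}, the colour $1+\ell$ lies in $\setn{k}$ and $m\df k-\ell\geq 2$. Concretely I would set $\localparam_1\df\localparam_i\df\localparam_{1+\ell}\df 1$ and $\localparam_s\df 0$ otherwise, so that $\localparam=\localparam_1+\localparam_i+\localparam_{1+\ell}=3$ and verifying \eqref{eq:lower-fstar} reduces to showing $f_\ell(\abs{1-p})+f_\ell(\abs{i-p})+f_\ell(\abs{(1+\ell)-p})\geq 1$ for every $p\in\setn{k}$. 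The bounds $2\leq i\leq\ell$ from \autoref{between} guarantee that the three colours are distinct.

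The engine of the proof is the same sandwich phenomenon already exploited in \autoref{cyclic} and \autoref{left}: the two masses at $1$ and $1+\ell$ sit exactly $\ell$ apart, so for every $p$ with $1\leq p\leq 1+\ell$ both distances $p-1$ and $(1+\ell)-p$ lie on the ascending branch $x\mapsto x/\ell$ of $f_\ell$ and sum to $\ell$, whence $f_\ell(p-1)+f_\ell(1+\ell-p)=1$ on its own. Thus the entire interval $p\in\set{1,\hdots,1+\ell}$ satisfies \eqref{eq:lower-fstar} irrespective of the third mass, and the only work left is the short block $p\in\set{\ell+2,\hdots,k}$, which holds just $m-1$ integer points because $\ell>\floor{\frac{k}{2}}$. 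Writing such a point as $p=\ell+1+q$ with $q\in\set{1,\hdots,m-1}$, a one-line computation shows that the two sandwich masses contribute $1-\frac{q\,(2\ell-k)}{(k-\ell)\,\ell}$, that is, they fall short of $1$ by a deficit that grows linearly in $q$ and is largest at the far end $p=k$. The whole burden of this block therefore rests on the third mass at $i$.

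This is where \autoref{between} enters. The third mass contributes $f_\ell(\ell+1+q-i)$ at $p=\ell+1+q$, and we must have this at least the deficit $\frac{q\,(2\ell-k)}{(k-\ell)\,\ell}$ for all $q\in\set{1,\hdots,m-1}$ simultaneously. Splitting according to whether the argument $\ell+1+q-i$ lands on the descending branch (large $q$) or the ascending branch (small $q$) of $f_\ell$ turns this into two linear requirements on $i$: the descending (far-end) requirement forces $i$ to be at least roughly $\frac{(k-\ell)\,(2\ell-k)}{\ell}$, and the ascending requirement forces $i$ to be at most roughly $\frac{\ell\,(k-\ell)}{2\ell-k}+1$ --- precisely the two bounds of \eqref{eq:between}. \autoref{between} asserts exactly that this window is non-empty and contains an integer, so such an $i$ can be chosen. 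I would then finish by a case distinction on the location of $p$ relative to the three masses, mirroring the five-case layout of \autoref{left}, each case collapsing to an elementary inequality.

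The main obstacle is this last step: proving that one source covers the whole leftover block at once, not merely its two endpoints. Because the deficit increases monotonically in $q$ while the third mass's contribution $f_\ell(\ell+1+q-i)$ first rises and then falls as $q$ grows, one must certify that the tent stays above the linear deficit at every intermediate integer $q$, and --- more delicately --- that the admissible interval for $i$ does not collapse for unfavourable ratios $\frac{\ell}{k}$. Guaranteeing that non-collapse is the sole purpose of \autoref{between}, whose two-sided bound on $i$ is tailored to exactly the far-end and near-end inequalities identified above.
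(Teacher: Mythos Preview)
Your proposal is correct and follows the paper's proof: the same three unit masses at $1$, $i$, $\ell+1$ with $i$ supplied by \autoref{between}, and the same case analysis (your observation that the two outer masses already sum to $1$ on $1\leq p\leq\ell+1$ is a slight streamlining of the paper's first two cases). The one extra wrinkle you should anticipate in the ascending-branch case $\ell+2\leq p\leq\ell+i$ is that the linear constraint's slope in $p$ changes sign at $2k=3\ell$, so the paper inserts a further sub-split there; in the sub-case $2k\geq 3\ell$ the upper bound of \eqref{eq:between} is not even needed, only $i\leq\ell$.
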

\begin{proof}
  We choose $2 \leq i \leq \ell$ as per \autoref{between}.
  Then define $\localparam_1 \df 1$ and $\localparam_i \df 1$ and $\localparam_{\ell+1} \df 1$.
  Let $p \in \setn{k}$. We have to show:
  \begin{equation*}
    \phi(p) \df \applysmall{f}{p-1} + \applysmall{f}{\abs{p-i}} + \applysmall{f}{\abs{p-(\ell+1)}} \geq 1
  \end{equation*}
  \begin{description}
  \item[Case $1 \leq p \leq i$:]
    \begin{equation*}
      \phi(p) = \frac{(p-1)+(i-p)+(\ell+1-p)}{\ell} = \frac{i+\ell-p}{\ell} \geq \frac{i+\ell-i}{\ell} = 1
    \end{equation*}
  \item[Case $i+1 \leq p \leq \ell+1$:]
    \begin{equation*}
      \phi(p) = \frac{(p-1)+(p-i)+(\ell+1-p)}{\ell} = \frac{p-i+\ell}{\ell} \geq \frac{(i+1)-i+\ell}{\ell} > 1
    \end{equation*}
  \item[Case $\ell+2 \leq p \leq \ell+i$:]
    We have
    \begin{equation*}
      \phi(p) 
      = \frac{k-(p-1)}{k-\ell} + \frac{p-i}{\ell} + \frac{p-(\ell+1)}{\ell}
      = \frac{k-p+1}{k-\ell} + \frac{2p-i-\ell-1}{\ell}
      \period
    \end{equation*}
    It follows:
    \begin{align*}
      \phi(p) \geq 1
      & \iff \ell \, (k-p+1) + (k-\ell) \, (2p-i-\ell-1) - (k-\ell) \, \ell \geq 0 \\
      & \iff p \, (2k-3\ell) + \ell - (k-\ell) \, (i+\ell+1) + \ell^2 \geq 0 \\
      & \iff p \, (2k-3\ell) - (k-\ell) \, (i+\ell+1) + \ell \, (\ell+1) \geq 0 \\
      & \iff p \, (2k-3\ell) + (\ell+1) \, (2\ell-k) \geq (k-\ell) \, i
    \end{align*}
    If $2k \geq 3\ell$, then we plug in the smallest value for $p$:
    \begin{align*}
      \phi(p) \geq 1
      & \impliedby (\ell+2) \, (2k-3\ell) + (\ell+1) \, (2\ell-k) \geq (k-\ell) \, i \\
      & \impliedby (\ell+1) \, (2k-3\ell) + (\ell+1) \, (2\ell-k) \geq (k-\ell) \, i \\
      & \iff (\ell+1) \, (2k-3\ell+2\ell-k) \geq (k-\ell) \, i \\
      & \iff \ell+1 \geq i
    \end{align*}
    The last condition is true by the choice of $i$, hence $\phi(p) \geq 1$.
    \par
    On the other hand, if $2k < 3\ell$, then we plug in the greatest value for $p$:
    \begin{align*}
      \phi(p) \geq 1
      & \impliedby (\ell+i) \, (2k-3\ell) + (\ell+1) \, (2\ell-k) \geq (k-\ell) \, i \\
      & \iff \ell \, (2k-3\ell) + (\ell+1) \, (2\ell-k) \geq (2\ell-k) \, i \\
      & \iff \frac{\ell \, (k-\ell)}{2\ell-k} + 1 \geq i
    \end{align*}
    The last condition is true by the choice of $i$, hence $\phi(p) \geq 1$.
  \item[Case $\ell+i+1 \leq p \leq k$:]
    We have
    \begin{equation*}
      \phi(p)
      = \frac{k-(p-1)}{k-\ell} + \frac{k-(p-i)}{k-\ell} + \frac{p-(\ell+1)}{\ell}
      = \frac{2(k-p)+i+1}{k-\ell} + \frac{p-\ell-1}{\ell}
      \period
    \end{equation*}
    Using $k < 3 \ell$, it follows:
    \begin{align*}
      \phi(p) \geq 1
      & \iff 2 \ell \, (k-p) + \ell \, (i + 1) + (k-\ell) \, (p-\ell-1) - (k-\ell) \, \ell \geq 0 \\
      & \iff p \, (k-3\ell) + 2 \ell k + \ell \, (i + 1) - (k-\ell) \, (2\ell+1) \geq 0 \\
      & \impliedby k \, (k-3\ell) + 2 \ell k + \ell \, (i + 1) - (k-\ell) \, (2\ell+1) \geq 0 \\
      & \iff k \, (k-\ell) + \ell \, (i + 1) - (k-\ell) \, (2\ell+1) \geq 0 \\
      & \iff \ell \, (i + 1) \geq (k-\ell) \, (2\ell-k+1) \\
      & \iff i \geq \frac{(k-\ell) \, (2\ell-k+1)}{\ell} - 1 \\
      & \impliedby i \geq \frac{(k-\ell) \, (2\ell-k)}{\ell}
    \end{align*}
    The last condition is true by the choice of $i$, hence $\phi(p) \geq 1$.\qedhere
  \end{description}
\end{proof}

\appendix

\newpage\noindent\Large\textbf{Appendix}\normalsize

\section{Upper Bound by Mean-Value Argument}
\label{app:mean-value}

We prove a rough bound on the price of anarchy for distance payoff (that is, $f(x)=x$)
using a straightforward generalization of a mean-value argument from the proof of \cite[Prop.~2]{KPR13}.
We believe that not much better bounds than this can be obtained without extending the technique.

\begin{proposition}
  \label{mean-value-2k}
  The price of anarchy for distance payoff is upper-bounded by~$2k$.
\end{proposition}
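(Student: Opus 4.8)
The plan is to invoke the reduction $\PoA(G,k,f) \le \localparam(f,k)$ from the Remark at the beginning of the section on the local parameter, and then to bound $\localparam(f,k) \le 2k$ directly from the definition in \eqref{eq:anarchy-value} by a pigeonhole (mean-value) argument in the spirit of Hoefer's proof for basic payoff. For distance payoff we have $f(x)=x$, hence $f^* = f(k-1) = k-1$. The key difference to the basic payoff case is that a differently colored neighbor is now only guaranteed to contribute $f(1)=1$, whereas in the basic case each such neighbor contributes the full $f^*$; this single, crude per-edge estimate is what I would carry through.

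Concretely, I would fix $\nn \in \NN$ and colors $\eli{c}{\nn} \in \setn{k}$ and argue as follows. By pigeonhole, among the $k$ colors there is one, say $t \in \setn{k}$, that occurs on at most $\frac{\nn}{k}$ of the indices. Hence at least $\nn - \frac{\nn}{k} = \nn \frac{k-1}{k} \ge \frac{\nn}{2}$ of the $c_i$ differ from $t$ (using $\frac{k-1}{k} \ge \frac12$ for $k \ge 2$), and each such index satisfies $\applysmall{f}{\abs{c_i-t}} = \abs{c_i-t} \ge 1$. Therefore
\[
  \max_{t \in \setn{k}} \sum_{i=1}^\nn \applysmall{f}{\abs{c_i-t}} \ge \frac{\nn}{2},
\]
and plugging this into \eqref{eq:anarchy-value} yields
\[
  \localparam(f,k) \le \frac{\nn f^*}{\nn / 2} = 2 f^* = 2(k-1) \le 2k,
\]
which combined with the Remark gives the claimed bound on the price of anarchy.

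I expect there to be no real obstacle here: the argument is elementary, and the only content is the observation that tracking merely $f(1)=1$ per properly separated neighbor, against $f^* = k-1$, forces the bound to grow linearly in $k$. That is precisely the point of the appendix --- the factor $f^*/f(1) = k-1$ is exactly the loss incurred, so no amount of tightening the pigeonhole count (which can at best replace $\frac12$ by $\frac{k-1}{k}$, improving $2k$ to $k$) will remove the dependence on $k$. To obtain a \emph{constant} bound one must instead exploit that a player can force \emph{large} distances to many neighbors, as is done by the halving idea in \autoref{quick} and by the splitting technique of \autoref{splitting}; contrasting those approaches with this weak pigeonhole estimate is what justifies the expectation that much better bounds cannot be had from this technique alone.
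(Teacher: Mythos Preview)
Your proof is correct, but it takes a different tack than the paper's. The paper's argument looks at the \emph{largest} color class in a player's neighborhood (of size at least $\deg(v)/k$ by pigeonhole), then has the player choose a color at distance $\floor{k/2}$ from that class, yielding contribution $\floor{k/2}$ from each of those $\geq \deg(v)/k$ neighbors; this gives payoff at least $\tfrac{\deg(v)}{k}\cdot\tfrac{k-1}{2}$ and hence the bound $2k$. You instead go for the \emph{least} frequent color and use the trivial per-neighbor lower bound $f(1)=1$ on the many ($\geq \nn\tfrac{k-1}{k}$) neighbors that differ. Your route is actually the more literal ``straightforward generalization'' of Hoefer's basic-payoff argument alluded to in the text, and without the rounding $\tfrac{k-1}{k}\geq\tfrac12$ it yields the slightly sharper bound $k$ (as you note yourself). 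The paper's variant trades ``many neighbors, small guaranteed gain'' for ``few neighbors, large guaranteed gain'' and lands at $2k$; either way the moral is the same, and your closing discussion of why neither can escape the $\Theta(k)$ dependence is apt.
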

\begin{proof}
  Let $c$ be a stable $k$-coloring for $G$ and fix a player $v$.
  Denote
  \begin{equation}
    \label{eq:ell}
    \ell \df \max_{t\in\setn{k}} \, \card{\set{w \in N(v) \suchthat c(w) = t}}
  \end{equation}
  the cardinality of the largest color class in $v$'s neighborhood.
  Then clearly $\ell \geq \ceiling{\frac{\deg(v)}{k}}$.
  Let $t$ be a color where the maximum in \eqref{eq:ell} is attained,
  so there are $\ell$ neighbors of $v$ with color $t$.
  At least one of the two numbers, $t+\floor{\frac{k}{2}}$ or $t-\floor{\frac{k}{2}}$ is in $\setn{k}$.
  By choosing an appropriate one of them, $v$ puts distance $\floor{\frac{k}{2}}$
  between herself and those $\ell$ neighbors, so each of them will contribute $\floor{\frac{k}{2}}$ to $v$'s payoff.
  In a stable coloring, such as $c$, player $v$ chooses such color or better,
  hence
  \begin{equation*}
    \welf(c) \geq \sum_{v \in V} \tfrac{\deg(v)}{k} \floor{\tfrac{k}{2}} = \tfrac{2m}{k} \floor{\tfrac{k}{2}}
    \geq \tfrac{2m}{k} \tfrac{k-1}{2} \period
  \end{equation*}
  Here, $m$ is the number of edges in $G$.
  Using the trivial upper bound $\welf_{\OPT} \leq 2m \, (k-1)$, we obtain:
  \begin{equation*}
    \frac{\welf_{\OPT}}{\welf(c)} \leq \frac{2m \, (k-1) \cdot 2k}{2m \, (k-1)} = 2k \tag*{\qedhere}
  \end{equation*}
\end{proof}

\section{Counting Neighbors with Same Color}
\label{app:same-color}

\begin{proposition}
  Define $f$ as counting the neighbors with same color, that is,
  \begin{equation*}
    \fnx{f}{[0,k] \map \RRnn}{x \mapsto
      \begin{cases}
        1 & \text{if $x=0$} \\
        0 & \text{otherwise}
      \end{cases}}
    \period
  \end{equation*}
  Then the price of anarchy with respect to $f$ is upper-bounded by $k$, and this bound is tight.
\end{proposition}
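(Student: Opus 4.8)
The plan is to treat the two directions separately: the upper bound via the splitting Lemma of \autoref{splitting}, and tightness via an explicit instance. Observe first that here $f^*=f(0)=1$ and that this $f$ is \emph{not} concave (it drops from $1$ to $0$ between distances $0$ and $1$), so the concave results of \autoref{splitting} and \autoref{general} do not apply. The splitting Lemma itself, however, is stated for an arbitrary $\fn{f}{[0,k]\map\RRnn}$, and the local parameter remains well defined since choosing $t=c_1$ already makes the denominator in~\eqref{eq:anarchy-value} at least $f(0)=1$; so I can still use that Lemma.

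For the upper bound I would take the uniform splitting $\localparam_s\df 1$ for every $s\in\setn{k}$, giving $\localparam\df\sum_{s=1}^k\localparam_s=k$. Condition~\eqref{eq:lower-fstar} then asks $\sum_{s=1}^k f(\abs{s-p})\geq 1$ for each $p\in\setn{k}$, which holds with equality: among the indices $s\in\setn{k}$ exactly the one with $s=p$ yields $\abs{s-p}=0$ and hence contributes $f(0)=1$, while every other index contributes $0$. The Lemma then gives $\localparam(f,k)\leq k$, and with the Remark $\PoA(G,k,f)\leq\localparam(f,k)$ we obtain the claimed bound. (The same estimate follows by pigeonhole: a player's best response is a plurality color in her neighborhood, which is shared by at least $\deg(v)/k$ of her neighbors.)

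For tightness I would build one graph attaining ratio $k$. Take the $2k$ vertices $a_1,a_2$ for $a\in\setn{k}$, with $c(a_1)\df a$ and $c(a_2)\df a$, and insert edges so that the vertices $a_1$ (over all $a\in\setn{k}$) form a clique, the vertices $a_2$ form a second clique, and each color class contributes the matching edge $\set{a_1,a_2}$. Every vertex is then $k$-regular and its neighborhood is \emph{rainbow}, containing each color exactly once: vertex $a_1$ is adjacent to $b_1$ for every $b\neq a$ and to the same-color vertex $a_2$. Under $c$ each player thus has payoff exactly $1$, and deviating to any color again yields payoff $1$, so $c$ is stable. Only the $k$ matching edges are monochromatic, so $\welf(c)=2k$; since $m=2\binom{k}{2}+k=k^2$ and the monochromatic coloring makes every edge contribute, $\welf_{\OPT}=2m=2k^2$, and the ratio is exactly $k$.

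Since the upper bound is essentially immediate, the one creative step — and the main obstacle — is designing the tight instance. The guiding idea is the rainbow neighborhood: if every vertex sees each color exactly once, then all colors tie as best responses, so a maximally spread-out coloring is stable while the monochromatic optimum exploits every edge, losing precisely the factor $k$. Once the graph is written down, checking $k$-regularity, stability, and the welfare bookkeeping is routine.
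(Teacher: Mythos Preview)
Your proof is correct. For the upper bound, the paper gives the direct pigeonhole argument you mention parenthetically, while your primary route goes through the splitting Lemma with the uniform splitting $\localparam_s\equiv 1$; this is a pleasant observation that the Lemma applies verbatim to non-concave $f$ once one checks (as you do) that the denominator in~\eqref{eq:anarchy-value} is still positive, and the two arguments become equivalent once unwound. For the lower bound, the paper uses the complete bipartite graph $K_{k,k}$ with the coloring $c(v_i)=c(w_i)=i$, whereas you use two disjoint copies of $K_k$ joined by a perfect matching; both constructions realize the same guiding idea of a $k$-regular graph in which every neighborhood is rainbow, so each player's payoff is identically $1$ regardless of her color and the monochromatic optimum beats this by a factor of $k$. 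The paper's instance has the minor bonus of being bipartite, in line with its earlier remark that worst cases often occur already on bipartite graphs; your instance is not bipartite for $k\geq 3$ but is equally valid.
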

\begin{proof}
  The upper bound follows from a mean-value argument like in the proof of \autoref{mean-value-2k}:
  for each player $v$, there is one color with which at least $\ceil{\frac{\deg(v)}{k}}$
  of her neighbors are colored, so choosing this color will yield at least that much payoff for $v$.
  Hence $\welf(c) \geq \sum_{v \in V} \frac{\deg(v)}{k} = \frac{2m}{k}$ for each stable coloring $c$.
  Using the trivial upper bound $\welf_{\OPT} \leq 2m$ yields the claim.
  \par
  For the lower bound, consider the complete bipartite graph $K_{k,k}$.
  Clearly, $\welf_{\OPT} = 2 k^2$, which is attained if all players choose the same color, for example color~$1$.
  Enumerate vertices in one partition $\set{\eli{v}{k}}$ and in the other $\set{\eli{w}{k}}$
  and define $c(v_i) \df c(w_i) \df i$ for each $i \in \setn{k}$.
  Then $c$ is stable since whatever color a player chooses, her payoff is always $1$.
  We have $\welf(c) = 2 k$, and the price of anarchy is at least $\frac{2k^2}{2k}=k$.
\end{proof}

\end{document}